\setlist[description]{leftmargin=\parindent,labelindent=\parindent}
\newcommand{\alglinenoNew}[1]{\newcounter{ALG@line@#1}}
\newcommand{\alglinenoPop}[1]{\setcounter{ALG@line}{\value{ALG@line@#1}}}
\newcommand{\alglinenoPush}[1]{\setcounter{ALG@line@#1}{\value{ALG@line}}}
\newcommand{\sys}{DAG-Rider\xspace}
\newcommand{\com}[1]{}
\newcommand\StateX{\Statex\hspace{\algorithmicindent}}
\newcommand\StateXX{\StateX\hspace{\algorithmicindent}}
\algrenewcommand\textproc{}
\newtheorem{lemma}{Lemma}
\newtheorem{claim}{Claim}
\theoremstyle{definition}
\newtheorem{definition}{Definition}[section]
\crefname{table}{table}{tables}
\Crefname{table}{Table}{Tables}
\crefname{algocf}{alg.}{algs.}
\Crefname{algocf}{Alg.}{Algs.}
\Crefname{figure}{Fig.}{Figs.}
\crefname{figure}{fig.}{figs.}
\crefname{claim}{claim}{claims}
\Crefname{claim}{Claim}{Claims}
	\protected@write\@auxout{}{%
		\string\@restatetheorem{#1}{\detokenize\expandafter{\BODY}}%
	}%
\BODY\end{lemma}%
\newcommand{\@restatetheorem}[2]{%
	\expandafter\gdef\csname restatethis@#1\endcsname{#2}%
}
\newcommand{\restate}[1]{%
	\begingroup
	\renewcommand{\thetheorem}{\ref{#1}}%
	\begin{lemma}\csname restatethis@#1\endcsname\end{lemma}%
	\endgroup
}
\begin{document}
\settopmatter{printfolios=true}
\title{All You Need is DAG}

\author{Idit Keidar}
\affiliation{%
  \institution{Technion}
  \country{}   
}

\author{Eleftherios Kokoris-Kogias}
\affiliation{%
	\institution{IST Austria and Novi Research}
	  \country{}   
}

\author{Oded Naor}
\authornote{Oded Naor is grateful to the Technion Hiroshi
	Fujiwara Cyber-Security Research Center for providing a research grant. Part of Oded's work was done while
	at Novi Research.}
\affiliation{%
	\institution{Technion}
	  \country{}   
}

\author{Alexander Spiegelman}
\affiliation{%
	\institution{Novi Research}
	  \country{}   
}

\begin{abstract}
	We present \emph{\sys}, the first asynchronous Byzantine Atomic
	Broadcast protocol that achieves optimal resilience, optimal amortized
	communication complexity, and optimal time complexity.
	\sys is post-quantum safe and ensures that all values proposed by
	correct processes eventually get delivered.
	We construct \sys in two layers: In the first layer, processes
	reliably broadcast their proposals and build a structured Directed
	Acyclic Graph (DAG) of the communication among them.
	In the second layer, processes locally observe their DAGs and totally
	order all proposals with no extra communication.
\end{abstract}


\maketitle



\sloppy
\section{Introduction}

The amplified need in scalable geo-replicated Byzantine fault-tolerant
reliability systems has motivated an enormous amount of study on the
Byzantine State Machine Replication (SMR)
problem~\cite{castro1999practical, kotla2007zyzzyva}.
Many variants of the problem were defined in recent
years~\cite{yin2019hotstuff,lev2020fairledger,kelkar2020order} to
capture the needs of blockchain systems.
To address the fairness issues that naturally arise in
interorganizational deployments,  
we focus on the classic long-lived Byzantine Atomic Broadcast (BAB)
problem~\cite{correia2006consensus, cachin2001secure}, which in
addition to total order and progress also guarantees
that \emph{all} proposals by correct processes are eventually
included.

Up until recently, asynchronous  protocols for the Byzantine consensus
problem~\cite{kapron2010fast,canetti1993fast,cachin2001secure} have been considered too
costly or complicated to be used in practical SMR solutions. 
However, two recent single-shot Byzantine consensus
papers, VABA~\cite{abraham2019vaba} and later Dumbo~\cite{lu2020dumbo},
presented asynchronous solutions with (1) optimal resilience, (2)
expected constant time complexity, and (3) optimal quadratic
communication and optimal amortized linear communication complexity (for the latter).
In this paper, we follow this recent line of work and present
\emph{\sys}: the first asynchronous BAB protocol with optimal resilience, optimal round
complexity, and optimal amortized communication complexity.
In addition, given a perfect shared coin abstraction, our protocol does
not use signatures and does not rely on asymmetric cryptographic
assumptions.
Therefore, when using a deterministic threshold-based coin
implementation with an information theoretical agreement
guarantee~\cite{cachin2005Constantinople,
loss2018combining}, the safety properties of our BAB
protocol are post-quantum secure.

\paragraph{Overview}

We construct \emph{\sys} in two layers: a communication layer and
a zero-overhead ordering layer.
In the communication layer, processes reliably broadcast their proposals
with some meta-data that help them form a \emph{Directed Acyclic
Graph (DAG)} of the messages they deliver.
That is, the DAG consists of rounds s.t.\ every process broadcasts at
most one message in every round and every message has $O(n)$ references
to messages in previous rounds, where $n$ is the total number of
processes.
The ordering layer does not require any extra communication.
Instead, processes observe their local DAGs and
with the help of a little randomization (one coin flip per $O(n)$
decisions on values proposed by different processes) locally order all
the delivered messages in their local DAGs.

A nice feature of \sys is that the propose operation
is simply a single reliable broadcast.
The agreement property of the reliable broadcast ensures that all
correct processes eventually see the same DAG.   
Moreover, the validity property of the reliable broadcast guarantees
that all broadcast messages by correct processes are eventually
included in the DAG.
As a result, in contrast to the VABA and Dumbo protocols that
retroactively ignore half the protocol messages and commit one value
out of $O(n)$ proposals, \sys does not waste any of the messages
and all proposed values by correct processes are eventually ordered (i.e., there is no
need to re-propose).
  
\paragraph{Complexity}

We measure time complexity as the asynchronous 
time~\cite{canetti1993fast} required to commit
$O(n)$ values proposed by different correct processes, and we measure
communication complexity by the number of bits processes send to commit
a single value.
To compare \sys to the state-of-the-art asynchronous
Byzantine agreement protocols, we consider SMR implementations that run an
unbounded sequence of the VABA or Dumbo protocols to independently
agree on every slot.
To compare apples to apples in respect to our time complexity
definition, we allow VABA and Dumbo based SMRs to run up to $n$ slots
concurrently.
Note, however, that for execution processes must output the slot
decisions in a sequential order (no gaps).
Therefore, based on the proof in~\cite{ben2003resilient}, the time
complexity of VABA and Dumbo based SMRs is $O(\log(n))$.
\Cref{table:intro} compares \sys to VABA and Dumbo based SMRs.

\begin{table*}[]
\centering
\setlength\doublerulesep{0.5pt}
\begin{tabular}{ |c|c|c|c|c| }
    \hline
        & Communication  &Expected time  & Post-Quantum & Eventual \\
        & Complexity & Complexity  & Safety &
        Fairness \\
    \hhline{=====} 
        VABA SMR & $O(n^2)$ & $O(\log(n))$ & no & no  \\ 
    \hline
        Dumbo SMR & $\text{amortized } O(n)$ & $O(\log(n))$ & no & no 
        \\
    \hhline{=====}
        \sys  + \cite{bracha1987asynchronous} &\text{amortized } $O(n^2)$ & $O(1)$ & yes & yes  \\
            \hline
        \sys  + \cite{guerraoui2019scalable} &\text{amortized} $O(n \log (n))$ &
        $O \left(\frac{\log(n)}{\log (\log (n))} \right)$ & yes  &
        (1-$\epsilon$)-fair
        \\
            \hline
        \sys  + \cite{cachin2005asynchronous} &\text{amortized } O(n) & $O(1)$ & yes & yes  \\ 
    \hline 
               
\end{tabular}
\caption{A comparison between our protocol with different reliable
broadcast instantiations and VABA and Dumbo based SMR protocols. }
\label{table:intro}
\end{table*}

Since our protocol uses a reliable broadcast abstraction as a basic
building block, different instantiations yield different
complexity. 
For example, if we use the classic Bracha broadcast~\cite{bracha1987asynchronous} to propose
a single value in each message, we get a communication complexity of
$O(n^3)$ per decision.
This is because the Bracha broadcast complexity is $O(n^2)$, and in
order to form a DAG each message has to include an $O(n)$ references to
previous messages.
If we are willing to allow a probability $\epsilon$ to violate progress,
then we can use Guerraoui et al.'s broadcast
protocol~\cite{guerraoui2019scalable} and reduce the complexity to
$O(n^2 \log(n))$ per decision.

Now, just as Dumbo amortizes VABA's communication complexity
from quadratic to linear by using batching and adding a phase of erasure
coding to more economically distribute the data, we can amortize our
communication complexity to be linear per decision as well.
First, since we are anyway including a vector of $O(n)$ references in
every broadcast, batching $O(n)$ proposals in each message shaves a factor of $n$
of the total communication complexity even with Bracha broadcast.
To arrive at the optimal linear complexity, we can replace the reliable
broadcast with the asynchronous verifiable information dispersal
of Cachin and Tessaro~\cite{cachin2005asynchronous}.
The communication complexity of that protocol is $O(n^2\log(n) + n|V|)$,
where $|V|$ is the message size, which allows us to batch $O(n \log (n))$
proposals to achieve optimal amortized communication complexity.

A final feature of our protocol, which is sometimes
underestimated and cannot be presented in a table, is elegance: (1) \sys's modularity clearly separates the communication layer from
the ordering logic; (2) the reliable broadcast abstraction's different
instantiations yield protocols with different trade-offs, and; (3) the
entire detailed pseudocode of the ordering logic spans less than 30
lines.

The rest of this paper is structured as follows: \Cref{sec:model} describes the model and the building blocks used for \sys; \Cref{sec:problemDef} formally defines the BAB problem; \Cref{sec:DAGconstruction} describes the DAG construction layer; \Cref{sec:VAB} specifies the \sys protocol on top of the DAG layer; \Cref{sec:analysis} proves the correctness of the protocol and analyzes its performance; \Cref{sec:related} describes related work; and lastly, \Cref{sec:conclusion} concludes the paper.

\section{Model and Building Blocks}
\label{sec:model}

The system consists of a set $\Pi = \left\{ p_1, \ldots, p_n \right\}$
of $n$ processes,  up to $f < n/3$ of which can act arbitrarily, i.e., be \emph{Byzantine}.
For simplicity, we consider a total of $n = 3f+1$ processes.
The link between every two correct processes is reliable.
Namely, when a correct process sends a message to another correct process, the message eventually arrives and the recipient can verify the sender's identity.
The communication is asynchronous, i.e., there is no bound on the message delivery time.
We consider an adaptive adversary that can dynamically corrupt up to $f$
processes during the run.
Once the adversary corrupts a process, it can drop undelivered messages previously sent from
that process to others.
The adversary controls the arrival times of messages.
As part of the construction, we use two building blocks: a reliable broadcast layer and a delayed global perfect coin, which we describe next.

\paragraph{Reliable broadcast}
There are known algorithms such as Bracha
broadcast~\cite{bracha1987asynchronous} to realize the reliable
broadcast abstraction in the asynchronous network model.
There are also efficient gossip
protocols~\cite{karp2000randomized, boyd2006randomized,
bortnikov2009brahms, guerraoui2019scalable} that provide reliable
broadcast whp at a sub-quadratic communication cost in the number of
processes, and asynchronous verifiable information dispersal
protocols~\cite{lu2020dumbo,cachin2005asynchronous} that use erasure
codes to efficiently batch the broadcast values.

Since we are interested in constructing an asynchronous Atomic
Broadcast that satisfies liveness with probability $1$, we define the
reliable rebroadcast abstraction accordingly to allow the use of
efficient gossip protocols.
Formally, each sender process $p_k$ can send messages by calling
$\textit{r\_bcast}_k(m,r)$, where $m$ is a message, $r \in \mathbb{N}$
is a round number.
Every process $p_i$ has an output $\textit{r\_deliver}_i(m,r,p_k)$,
where $m$ is a message, $r$ is a round number, and $p_k$ is the process
that called the corresponding $\textit{r\_bcast}_k(m,r)$.
The reliable broadcast abstraction guarantees the following properties:
\begin{description}
    \item[Agreement] If a correct processes $p_i$ outputs
    $\textit{r\_deliver}_i(m,r,p_k)$, then every other correct process
    $p_j$ eventually outputs $\textit{r\_deliver}_j(m,r,p_k)$ with probability
    $1$.
    \item[Integrity] For each round $r \in \mathbb{N}$ and process $p_k
    \in \Pi$, a correct process $p_i$ outputs
    $\textit{r\_deliver}_i(m,r,p_k)$ at most once regardless of~$m$.
    \item[Validity] If a correct process $p_k$ calls
    $\textit{r\_bcast}_k(m,r)$, then every correct processes $p_i$
    eventually outputs $\textit{r\_deliver}_i(m,r,k)$ with probability
    $1$.
\end{description}

\paragraph{Global perfect coin}
We use a \emph{global perfect coin}, which is unpredictable by the
adversary. 
An instance $w$, $w \in \mathbb{N}$, of the coin is invoked by process $p_i \in \Pi$ by calling $\textit{choose\_leader}_i(w)$.
This call returns a process $p_j \in \Pi$, which is the chosen leader for instance $w$.
Let $X_w$ be the random variable that represents the probability that the coin returns process $p_j$ as the return value of the call $\textit{choose\_leader}_i(w)$.
The global perfect coin has the following guarantees:
\begin{description}
    \item[Agreement] If two correct processes call $\textit{choose\_leader}_i(w)$ and $\textit{choose\_leader}_j(w)$ with respective return values $p_1$ and $p_2$, then $p_1=p_2$.
    \item [Termination] If at least $f+1$ processes call $\textit{choose\_leader}(w)$, then every $\textit{choose\_leader}(w)$ call eventually returns.
    \item[Unpredictability] As long as less than $f+1$ processes call $\textit{choose\_leader}(w)$, the return value is indistinguishable from a random value except with negligible probability $\epsilon$.
    Namely, the probability $pr$ that the adversary can guess the returned
    process $p_j$ of the call $\textit{choose\_leader}(w)$ is $pr \leq
    \Pr [X_w = p_j] + \epsilon$. 
    \item[Fairness] The coin is fair, i.e., $\forall w \in \mathbb{N}, \forall p_j \in \Pi \colon \Pr[X_w = p_j] = 1/n$.
\end{description}

Such coins were used as part of previous Byzantine Agreement
protocols such as~\cite{abraham2019vaba,
cachin2005Constantinople, blum2020byzantine, lu2020dumbo}.
Implementation examples can be found in~\cite{cachin2005Constantinople,
loss2018combining}.
One way to implement a global perfect coin is by using PKI and a
threshold signature scheme~\cite{libert2016born, boneh2001short,
shoup2000practical} with a threshold of $(f+1)$-of-$n$.
When a process invokes an instance $w$ of the coin, it signs $w$ with
its private key and sends the share to all the processes.
Once a process receives $f+1$ shares, it can combine them to get the
threshold signature and hash it to get a random process.
Since the threshold signature value is deterministically determined
by the instance name $w$ such that any $f+1$ shares reveal it
(e.g., the schema in~\cite{shoup2000practical} is based on Shamir's
secret sharing~\cite{shamir1979share}), the coin is perfect (all process agree on the
leader) and its agreement property has information theoretical guarantee.
However, to ensure unpredictability, the PKI must be trusted to ensure that
the adversary cannot generate enough shares to reveal the randomness
before a correct process produces them.
Usually, one assumes that a trusted dealer is used to set up the random
keys for all processes.
However, this assumption can be relaxed by executing an $O(n^4)$
message complexity Asynchronous Distributed Key Generation
protocol~\cite{kokoris2020asynchronous}.
Either way, this scheme remains unpredictable only if the adversary is
computationally bounded.
However, since \sys relies on the unpredictability property of the coin
only for liveness, its safety properties are post-quantum secure.

\section{Problem Definition}
\label{sec:problemDef}
The problem we solve is \emph{Byzantine Atomic Broadcast (BAB)}, which
allows processes to agree on a sequence of messages as needed for State
Machine Replication (SMR).
%
Due to the FLP result~\cite{fischer1985impossibility}, BAB cannot be
solved deterministically in the asynchronous setting, and therefore we
use the global perfect coin to provide randomness that ensures liveness
with probability 1.
%
To avoid confusion with the events of the underlying reliable broadcast
abstraction, we name the broadcast and deliver events of BAB as
$\textit{a\_bcast}(m,r)$ and $\textit{a\_deliver}(m,r,k)$,
respectively, where $m$ is a message, $r \in \mathbb{N}$
is a sequence number, and $p_k \in \Pi$ is a process. 
The purpose of the sequence numbers is to distinguish between messages
broadcast by the same process.
For simplicity of presentation, we assume that each process broadcasts
infinitely many messages with consecutive sequence numbers.

\begin{definition} [Byzantine Atomic Broadcast] Each correct
process $p_i \in \Pi$ can call $\textit{a\_bcast}_i(m,r)$ and output
$\textit{a\_deliver}_i(m,r,k)$, $p_k \in \Pi$.
A Byzantine Atomic Broadcast protocol satisfies reliable broadcast
(agreement, integrity, and validity) as well as:
\begin{description}
    \item[Total order] If a correct process $p_i$ outputs
    $a\_deliver_i(m,r,k)$ before $a\_deliver_i(m',r',k')$, then no correct 
    process $p_j$ outputs $a\_deliver_j(m',r',k')$ without first outputting
    $a\_deliver_j(m,r,k)$.
\end{description}
\end{definition}

In the context of Byzantine SMR (e.g., blockchains), the BAB
abstraction support the separation between sequencing of transactions
and execution as done in~\cite{androulaki2018hyperledger}.
BAB provides a mechanism to propose transactions and totally order
them, and an execution engine will have to validate the
transactions before applying them to the SMR.

Moreover, note that our BAB definition provides a
stronger guarantee than the one provided by the sequencing protocols
realized in most Byzantine SMR systems.
Our validity property requires that all messages broadcast by correct
processes are eventually ordered (with probability $1$), whereas most
Byzantine SMR protocols
(i.g.,~\cite{nakamoto2008bitcoin,yin2019hotstuff,castro1999practical}
require that in an infinite run, an infinite number of decisions are
made, but some proposals by correct processes can be ignored.
In addition, it is important to note that our BAB protocol satisfies
chain quality. 
That is, for every prefix of ordered messages of size $(2f+1)r$, $r \in
\mathbb{N}$, at least $(f+1)r$ were broadcast by correct processes.

\paragraph{Communication measurement}
To analyze amortized communication complexity we assume that each
message contains a \emph{block} of transactions, and we say that a
transaction in a message $m$ is \emph{ordered} when all honest parties
$a\_deliver$ $m$.
We measure \emph{communication complexity} as the total number of bits
sent by honest processes to order a single transaction.
To be able to measure the asynchronous running time we
follow~\cite{canetti1993fast} and define a \emph{time unit} for every
execution $r$ to be the maximum time delay of all messages among
correct processes in $r$.
We measure \emph{time complexity} as the expected number of time units
it takes for a correct party to deliver $O(n)$ values proposed by
different correct processes starting from any point in the execution.

\section{DAG Abstraction} \label{sec:DAGconstruction}

\begin{figure*}[t]
	\centering
	\includegraphics[width=\textwidth]{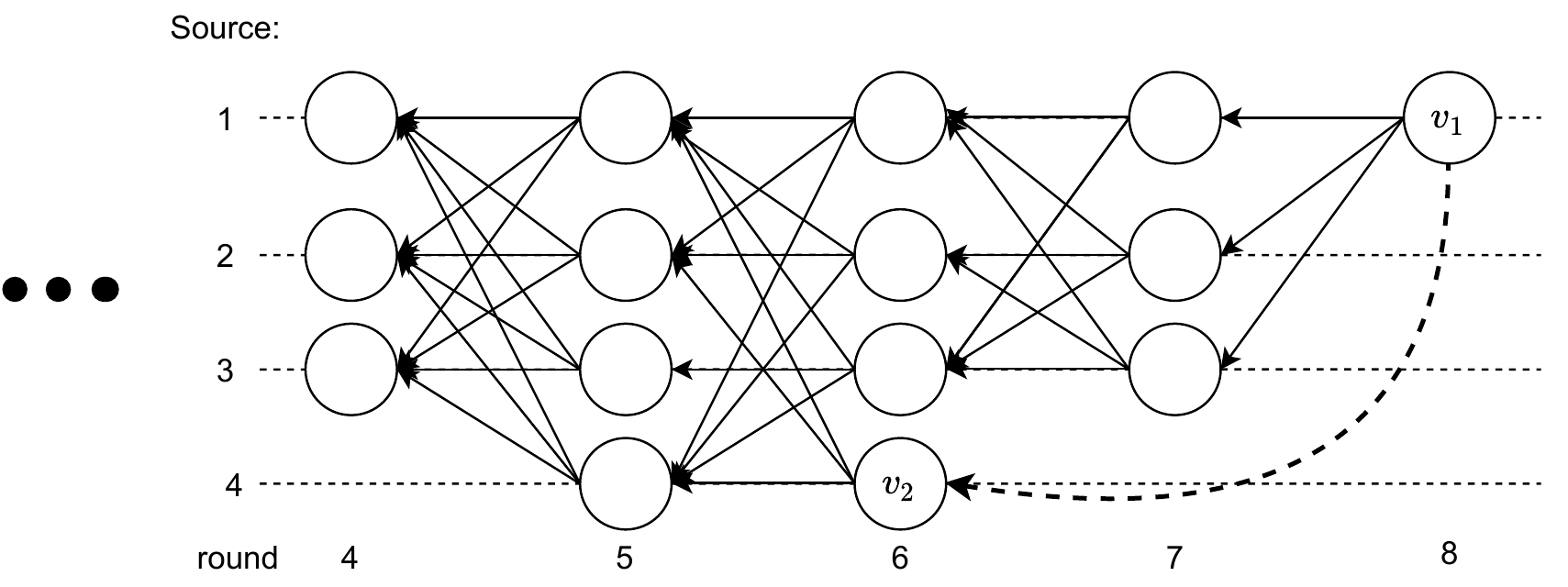}
	\caption{\textbf{Illustration of $\mathbf{DAG_1}$, i.e., the DAG at process 1,
	out of a total of four processes.}
On each horizontal dotted line are the vertices from a single source,
e.g, the bottom line shows the vertices delivered from process 4.
Each vertical column of vertices is a single round.
Each completed round has at least $2f+1 = 3$ vertices.
Each vertex in the DAG has at least $2f+1$ strong edges to vertices
from the previous round shown as black solid arrows.
Each vertex can also have weak edges to vertices in case there is
no other path in the DAG to the vertex.
E.g., $v_1$ in the illustration has a weak edge to $v_2$, shown as a
dotted arrow to $v_2$.}
\label{fig:DAGillustration}
\end{figure*}

\begin{algorithm*}[t]
    \caption{Data structures and basic utilities for process $p_i$}
    \label{alg:dataStructures}
    \small
    \begin{algorithmic}[1]
        \Statex \textbf{Local variables:}
        \StateX struct $\textit{vertex } v$: \Comment{The struct of a vertex in the DAG}
        \StateXX $v.\textit{round}$ - the round of $v$ in the DAG
        \StateXX $v.\textit{source}$ - the process that broadcast $v$
        \StateXX $v.\textit{block}$ - a block of transactions
        \StateXX $v.\textit{strongEdges}$ - a set of vertices in
        $v.\textit{round}-1$ that represent \emph{strong} edges 
        \StateXX $v.\textit{weakEdges}$ - a set of vertices in rounds $<
        v.\textit{round}-1$ that represent \emph{weak} edges 
        \StateX $DAG[]$ - An array of sets of vertices, initially:
            \StateXX $DAG_i[0] \gets$ predefined hardcoded set of $2f+1$
            vertices 
            \StateXX  $\forall j \geq 1 \colon DAG_i[j] \gets \{\}$ 
        \StateX $\textit{blocksToPropose}$ - A queue, initially empty,
        $p_i$ enqueues valid blocks of transactions from clients
        
        \vspace{0.5em}
        \Procedure{\textit{path}}{$v,u$} \Comment{Check if exists a path consisting of strong and weak edges in the DAG}
        \State \Return exists a sequence of $k \in \mathbb{N}$,
        vertices $v_1,v_2,\ldots,v_k$  s.t.\
        \StateXX $v_1 = v $, $v_k = u$, and $\forall i \in [2..k]
        \colon v_i \in \bigcup_{r \geq 1}
        DAG_i[r] \wedge (v_i \in v_{i-1}.\textit{weakEdges} \cup
        v_{i-1}.\textit{strongEdges})$
        \EndProcedure
        
        \vspace{0.5em}
        \Procedure{\textit{strong\_path}}{$v,u$} \Comment{Check if exists a path consisting of only strong edges in the DAG}
        \State \Return exists a sequence of $k \in \mathbb{N}$,
        vertices $v_1,v_2,\ldots,v_k$  s.t.\
        \StateXX $v_1 = v $, $v_k = u$, and $\forall i \in [2..k]
        \colon v_i \in \bigcup_{r \geq 1}
        DAG_i[r] \wedge v_i \in v_{i-1}.\textit{strongEdges}$

        \EndProcedure
        
        \alglinenoNew{counter}
        \alglinenoPush{counter}

    \end{algorithmic}
\end{algorithm*}

\begin{algorithm*}[t]
	\caption{\textbf{DAG Construction}, pseudocode for process $p_i$}
	\label{alg:DAGabstraction}
	\small
	\begin{algorithmic}[1]
	\alglinenoPop{counter}

		\Statex \textbf{Local variables:}
		\StateX $r \gets 0$ \Comment{round number}
		\StateX $\textit{buffer} \gets \{\}$

		\vspace{0.5em}
		
		\While{True}
		\For{$v \in \textit{buffer} \colon v.\textit{round} \leq r$} \label{alg:DAG:goThroughBuffer}
		  
		  \If{$\forall v' \in v.\textit{strongEdges} \cup v.\textit{weakEdges} \colon v' \in \bigcup_{k \geq 1} DAG[k]$} \Comment{We have $v$'s predecessors} \label{alg:DAGabstraction:prevVerticesIf}
		  
		      \State $DAG[v.\textit{round}] \gets DAG[v.\textit{round}] \cup
		      \left\{ v \right\}$ \label{alg:DAGabstraction:addVertexToDAG}
		      \State $\textit{buffer} \gets \textit{buffer} \setminus \{ v \}$ 
		  
		  \EndIf
		\EndFor

		\If {$\left| DAG[r] \right| \geq 2f + 1$} \Comment{Start a new round} \label{alg:DAG:checkNewRound}
		\If{$r \bmod 4 = 0$ } \Comment{If a new wave is complete}
		  \State $\textit{wave\_ready}(r/4)$ \label{alg:DAGabstraction:waveReady}
		  \Comment{Signal to \Cref{alg:SMROnDag} that a new wave is complete}
		\EndIf
		\State $r \gets r+1$ 
		\State $v \gets \textit{create\_new\_vertex}(r)$
		\State $\textit{r\_bcast}_i(v,r)$ \label{alg:DAGabstraction:RBcast}
		\EndIf
		\EndWhile
		
		\vspace{0.5em}
		\Procedure{\textit{create\_new\_vertex}}{round} \label{alg:DAG:createNewVertex}
		\State \textbf{wait until}
		$\neg$\textit{blocksToPropose}.\text{empty}()
		\Comment{atomic broadcast blocks are enqueued
        (\Cref{alg:SMR:enqueue})} 
		\State $v.\textit{block} \gets
		\textit{blocksToPropose}.\text{dequeue}()$
		\Comment{We assume each process atomically broadcast infinitely
		many blocks} \label{alg:SMR:dequeue} \State $v.\textit{strongEdges}
		\gets DAG[\textit{round} - 1]$ \State $\textit{set\_weak\_edges}(v,\textit{round})$ \State \textbf{return} $v$
		\EndProcedure
		
		\vspace{0.5em}
        \Upon{$\textit{r\_deliver}_i(v, \textit{round},p_k)$}
        \Comment{The deliver output from the reliable broadcast} \label{alg:DAGabstraction:reliableDeliver} \State $v.\textit{source} \gets p_k$
        \State $v.\textit{round} \gets \textit{round}$
        \If{$\left| v.\textit{strongEdges} \right|
        \geq 2f+1$} \label{alg:DAGabstraction:reliableIf}

            \State $\textit{buffer} \gets \textit{buffer} \cup \left\{ v \right\}$      
        
        \EndIf
        
        \EndUpon
		
		
		\vspace{0.5em}
		\Procedure{\textit{set\_weak\_edges}}{$v, \textit{round}$} \Comment{Add weak
		edges to orphan vertices} \label{alg:DAG:getWeakEdges} 
		\State $v.\textit{weakEdges} \gets \{\}$
		\For{$r=\textit{round}-2$ down to 1}
		\For{\textbf{every} $u \in DAG_i[r]$ s.t. $\neg \textit{path}(v,u) $}
		\State $v.\textit{weakEdges} \gets v.\textit{weakEdges} \cup \{u \}$
		\EndFor
		\EndFor
		\EndProcedure
		
		\alglinenoPush{counter}
	\end{algorithmic}
\end{algorithm*}

Our BAB protocol, \sys, is based on a Directed
Acyclic Graph (DAG) abstraction, which represents the communication
layer of the processes.
In a nutshell, each vertex in the DAG represents a reliable broadcast
message from a process, and each message contains, among other data,
references to previously broadcast vertices.
Those references are the edges of the DAG.
Each correct process maintains a copy of the DAG as it perceives it. 
Different correct processes might observe different states of the DAG
during different times of the run, but reliable broadcast prevents
equivocation and guarantees that all correct processes eventually
deliver the same messages, so their views of the DAG eventually
converge.

For each process $p_i$, denote $p_i$'s local view of the
DAG as $DAG_i$, which is stored as an array $DAG_i[]$.
As we shortly explain, each vertex in the DAG is associated with a
unique round number and a source (its generating process).
At any given time, $DAG_i[r]$ for $r \in \mathbb{N}$ is the set of all the
vertices associated with round $r$ that $p_i$ is aware of.
Each round has at most $n$ vertices, each with a different source.
Due to the reliable broadcast, no process can generate two
vertices in the same round.
 
Each vertex $v$ in a round $r$ has two sets of outgoing edges: 
a set of at least $2f+1$ \emph{strong edges} and a set of up to $f$
\emph{weak edges}.
Strong edges point to vertices in round $r-1$ and weak edges point
to vertices in rounds $r' < r -1 $ such that otherwise there is no path
from $v$ to them.
As explained in detail in~\Cref{sec:VAB}, strong edges are used for
agreement and weak edges make sure we eventually include all
vertices in the total order, to satisfy BAB's validity property.

The data types and variables for process $p_i$ are specified in
\Cref{alg:dataStructures} and the DAG construction is specified in
\Cref{alg:DAGabstraction}.
A vertex $v$ is a struct that holds a round number $r$, a source which
is the process that created $v$, a block of valid transactions
that was previously $a\_bcast$ by the upper BAB protocol,
strong edges to at least $2f+1$ vertices in round $r-1$, and weak edges
to vertices in rounds $r' < r-1$.
Vertices in the DAG are reliably broadcast
(\Cref{alg:DAGabstraction:RBcast}), and when the reliable broadcast
layer delivers a vertex $v$
(\Cref{alg:DAGabstraction:reliableDeliver}), processes use the round
number $r$ and the source process which are available from the reliable
broadcast and add them to $v$.
Then, they verify that $v$ has strong edges to at least $2f+1$ vertices
from round $r-1$ and it to a buffer.

Each process $p_i$ continuously goes through its buffer to check if
there is a vertex $v$ in it that can be added to its $DAG_i$
(\Cref{alg:DAG:goThroughBuffer}).
A vertex $v$ can be added to the DAG once the DAG
contains all the vertices that $v$ has a strong or weak edge to
(\Cref{alg:DAGabstraction:prevVerticesIf}).
When $p_i$ has at least $2f+1$ vertices in the current round, it
moves to the next round (\Cref{alg:DAG:checkNewRound}) by
creating and reliably broadcasting a new vertex $v'$. 
The new vertex $v'$ in round $r$ includes a block of transactions $b$
for which $p_i$ previously invoked $a\_bcast(b,r)$ (we assume each
process atomically broadcast infinitely many blocks), strong edges to
the vertices in $DAG_i[r]$ (\Cref{alg:DAG:createNewVertex}), and weak
edges to any vertices with no path from $v'$ to them
(\Cref{alg:DAG:getWeakEdges}).
Note that a vertex might be delivered at $p_i$'s DAG after $p_i$ has
moved to a later round.
In this case, the vertex is still added to the DAG, but $p_i$'s
vertices do not include strong edges to it.
Weak edges are possible.
As noted, the weak edges are used to ensure the BAB's Validity
property.
An example of our DAG construction is illustrated in
\Cref{fig:DAGillustration}.

\section{\sys: DAG-Based Asynchronous BAB Protocol}
\label{sec:VAB}

\begin{algorithm*}[t]
	\caption{\textbf{\sys: Byzantine Atomic Broadcast based on DAG.} Pseudocode for process $p_i$}
	\label{alg:SMROnDag}
	\small
	\begin{algorithmic}[1]
	\alglinenoPop{counter} 

		\vspace{0.5em}
		\Statex \textbf{Local Variables:}
		\StateX $\textit{decidedWave} \gets 0$
		\StateX $\textit{deliveredVertices} \gets \{\}$
		\StateX $\textit{leadersStack} \gets$ initialize empty stack with isEmpty(), push(), and pop() functions 
		\vspace{0.5em}
		
		\Upon{$\textit{a\_bcast}_i(b,r)$} \Comment{Correct processes call this procedure with sequential round $r$ numbers, starting at $1$} \label{alg:SMR:enqueue}
		
		\State $\textit{blocksToPropose}.\text{enqueue}(b)$
		\label{line:propose}
		\Comment{pushes a block of transactions to
		Alg~\ref{alg:DAGabstraction}}
		
		\EndUpon
		\vspace{0.5em}
		
		\Upon{$\textit{wave\_ready}(w)$} \Comment{Signal from the DAG layer
		that a new wave is completed (\Cref{alg:DAGabstraction:waveReady})}
		\label{alg:SMR:waveCompletion} 
		\State $v \gets \textit{get\_wave\_vertex\_leader}(w)$ \label{alg:SMR:getWaveLeader}
		\If{$v = \bot \vee \left|  \left\{ v' \in
		DAG_i[\textit{round}(w,4)]
		\colon
		\textit{strong\_path}(v',v) \right\} \right| < 2f+1$ } 
		\Comment{No commit}
		\label{alg:SMR:commitrule}
		  \State \Return
		\EndIf

        \State $\textit{leadersStack}.\text{push}(v)$ \label{alg:SMR:stackPush1}
        \For{wave $w'$ from $w - 1$ down to
        $\textit{decidedWave} + 1$} \label{alg:SMR:orderLadersForLoop}
        \State $v' \gets \textit{get\_wave\_vertex\_leader}(w')$
        \If{$v' \neq \bot \wedge \textit{strong\_path}(v,v')$}
            \label{alg:DAGabstraction:checkpath}
            \State $\textit{leadersStack}.\text{push}(v')$ \label{alg:SMR:stackPush2}
            \State $v \gets v'$
            \label{alg:SMR:orderLadersForLoopEnd}
            
        \EndIf
        \EndFor
        \State $\textit{decidedWave} \gets w$ \label{alg:SMR:decidedWaveUpdate}
		  \State $\textit{order\_vertices}(\textit{leadersStack})$
		
		\EndUpon
		\vspace{0.5em}

		\Procedure{\textit{get\_wave\_vertex\_leader}}{$w$} \label{alg:SMR:vertexFromCoin}
            \State $p_j \gets \textit{choose\_leader}_i(w)$
            \If{$\exists v\in DAG[\textit{round}(w,1)]$ s.t.\ $v.source = p_j$}
              \State \Return $v$ \Comment{There can only be one such vertex}
          \EndIf 
          \State \Return $\bot$ 
        \EndProcedure
        \vspace{0.5em}

		\vspace{0.5em}
		\Procedure{$\textit{order\_vertices}$}{\textit{leadersStack}}
		\label{alg:SMR:decideValuesProcedure}
		\While{$\neg \textit{leadersStack}.\text{isEmpty}()$} 
		\State $v \gets \textit{leadersStack}.\text{pop}()$ \label{alg:SMR:stackPop}
		  \State \textit{verticesToDeliver} $\gets \{v' \in \bigcup_{r > 0}
        DAG_i[r] \mid
		  path(v,v') \wedge v' \not\in \textit{deliveredVertices}\}$
		  \For{$\textbf{every} ~v' \in \textit{verticesToDeliver}$ in some deterministic
		  order}
	  		  \State \textbf{output}
	  		  $\textit{a\_deliver}_i(v'.\textit{block},v'.\textit{round},
	  		  v'.\textit{source})$
	  		  \label{alg:SMR:decide}
		      \State $\textit{deliveredVertices} \gets \textit{deliveredVertices} \cup \{v'\}$
		  \EndFor
		\EndWhile
		\EndProcedure

	\alglinenoPush{counter}	
	\end{algorithmic}
\end{algorithm*}

\begin{figure*}[t]
	\centering
	\includegraphics[width=\textwidth]{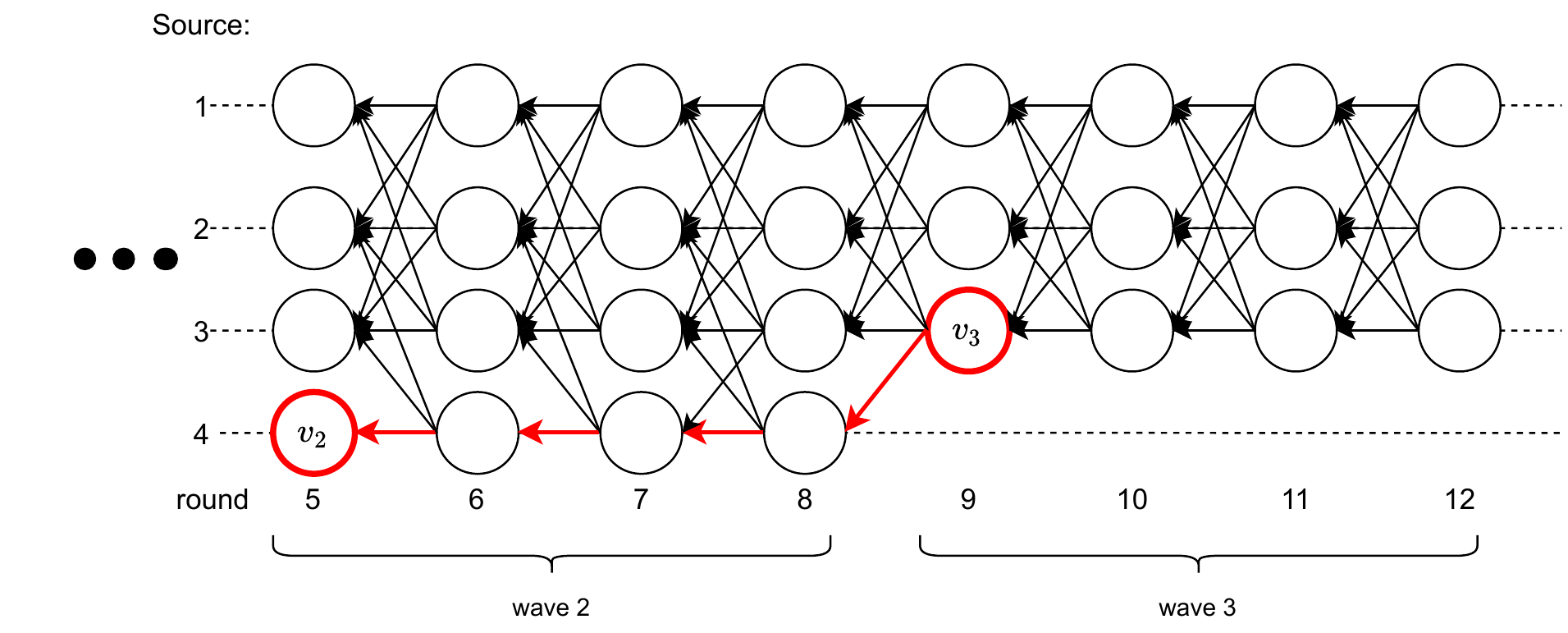}
	\caption{\textbf{Illustration of $\mathbf{DAG_1}$.}
	The highlighted vertices $v_2$ and $v_3$ are
	the leaders of waves 2 and 3, respectively.
	The commit rule is not met in wave 2 since there are less than $2f+1$
	vertices in round 8 with a strong path to $v_2$.
	However, the commit rule is met
	in wave 3 since there are $2f+1$ vertices in round 12 with a strong
	path to $v_3$.
	Since there is a strong path from $v_3$ to $v_2$ (highlighted), $p_i$ commits $v_2$
	before $v_3$ in wave 3.}
	\label{fig:decisionRule}
\end{figure*}

In this section, we describe the \sys protocol, by equipping the DAG
from the previous section with a global perfect coin\footnote{A
possible implementation of the coin using threshold signatures is
described in \Cref{sec:model}. The coin can be easily implemented as
part of the DAG itself by having each process send its share of the
threshold signature when reliably broadcasting a vertex.} and show how
the DAG and the coin can be used to construct a locally-computed
protocol for the BAB problem.
That is, given our DAG and a perfect coin, \sys
does not require any extra communication among the processes.
Instead, each process $p_i$ observes its local $DAG_i$ and deduces
which blocks of transactions to deliver and in what order.
The protocol is detailed in \Cref{alg:SMROnDag}.
Below we give a high-level intuition as well as a detailed description
of the protocol.
Formal correctness proofs and complexity
analyzes are given in~\Cref{sec:analysis}.

When an $\textit{a\_bcast}_i(b,r)$ is invoked, $p_i$ simply pushes $b$
to the DAG layer (\cref{line:propose}), which in turn includes it in the
$r^{th}$ vertex it reliably broadcasts.
To interpret the DAG, each process $p_i$ divides its local $DAG_i$ into
waves, where each wave consists of 4 consecutive rounds. 
For example, $p_i$'s first wave consists of $DAG_i[1], DAG_i[2],
DAG_i[3]$, and $DAG_i[4]$.
Formally, the $k^{th}$ round of wave $w$, where $k \in [1..4], w \in \mathbb{N}$, is defined as $\textit{round}(w,k) \triangleq 4(w-1)+k$.
We also say that a process $p_i$ \emph{completes round $r$} 
once $DAG_i[r]$ has at least $2f+1$ vertices, and a process
\emph{completes wave $w$} once the process completes $\textit{round}(w,4)$.

In a nutshell, the idea is to interpret the DAG as a wave-by-wave
protocol and try to \emph{commit} a randomly chosen single leader
vertex in every wave.
Once the sequence of leaders is determined,
processes $a\_deliver$ all the blocks included in their causal
histories (in vertices that have paths from the leaders in the DAG).
While reading the high-level
description below, bear in mind that due to the reliable broadcast, Byzantine processes cannot equivocate, so two
correct processes cannot have different vertices with the same source in
the same round, leading to eventually consistent DAGs among all correct processes.

When wave $w$ completes (\Cref{alg:SMR:waveCompletion}), we use the global perfect coin to
retrospectively elect some process and consider its vertex in the wave's first round as the leader of wave $w$ (\Cref{alg:SMR:getWaveLeader}).
The goal of the protocol is to commit this leader, provided that it has been observed by sufficiently many processes in the wave.
Note that since we advance rounds as soon as we deliver $2f+1$ of the $3f+1$ potential vertices, a process
$p_i$ might not have $w$'s leader in its local $DAG_i$ when
it completes $w$.
In this case, $p_i$ completes $w$ without committing any vertex and simply proceeds to the next wave.
Note, however, that some other correct process might have $w$'s leader in its local DAG and commit it in the same wave.
Thus, we need to make sure that if one correct process
commits the wave vertex leader $v$, then all the other correct processes will commit $v$ later.
To this end, we use standard quorum intersection.
Process $p_i$ commits the wave $w$ vertex leader $v$ if:
\[
\left|  \left\{ v' \in DAG_i[\textit{round}(w,4)] \colon
\textit{strong\_path}(v',v) \right\} \right| \geq
2f+1
\text{ (\Cref{alg:SMR:commitrule})}.
\]

In addition, if $p_i$ commits vertex $v$ in wave $w$ and there is a
strong path from $v$ to $v’$ such that $v’$ is an uncommitted leader
vertex in a wave $w' <w$, then $p_i$ commits $v'$ in $w$ as well.
The leaders committed in the same wave are ordered by their round
numbers, so that leaders of earlier waves are ordered before those of
later ones, meaning $v'$ is ordered before $v$
(Lines~\ref{alg:SMR:orderLadersForLoop}-\ref{alg:SMR:orderLadersForLoopEnd}).

The next lemma, which is proven in Section~\ref{sec:analysis}, shows that our
commit rule guarantees that if a correct process commits a wave leader
vertex $v$ in some wave, then all wave vertex leaders in later waves in
the local DAGs of all correct processes have a strong path to $v$, ensuring the agreement property.

%



\begin{restatable}{lemma}{commitPath}
\label{claim:commitPath}\label{lem:commit}
	If some process $p_i$ commits the leader vertex $v$ of a wave $w$, then
	for every leader vertex $u$ of a wave $w' > w$ and for every process
	$p_j$, if $u \in DAG_j[\textit{round}(w',1)]$, then
	$\textit{strong\_path}(u,v)$ returns true in wave $w'$.
\end{restatable}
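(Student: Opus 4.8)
The plan is to run a recursive quorum-intersection argument across the rounds separating wave $w$ from wave $w'$. First I would fix notation. Since reliable broadcast guarantees Agreement and Integrity, each pair (source, round) determines at most one delivered vertex, so the set of round-$r$ vertices and their strong/weak edge sets are globally well defined; write $V_r$ for the set of all round-$r$ vertices ever delivered by a correct process, so $|V_r| \le n = 3f+1$, and note that ``there is a strong path from $x$ to $v$'' is a statement about these intrinsic edge sets and hence global. Let $R_4 \triangleq \textit{round}(w,4)$ and $P_r \triangleq \{x \in V_r : \text{there is a strong path from } x \text{ to } v\}$. The hypothesis that $p_i$ commits $v$ in wave $w$ unfolds, by the commit rule on \Cref{alg:SMR:commitrule}, into $|P_{R_4}| \ge 2f+1$; this is the base case of the induction.

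Next I would record two structural facts. \textbf{(a)} By the DAG construction, a vertex enters a correct process's DAG only after all vertices it points to (strong and weak) are already present (\Cref{alg:DAGabstraction:prevVerticesIf}), and every delivered vertex carries at least $2f+1$ strong edges to the previous round (\Cref{alg:DAGabstraction:reliableIf}); chaining this downward from $u \in DAG_j[\textit{round}(w',1)]$ shows $|V_r| \ge 2f+1$ for every $1 \le r \le \textit{round}(w',1)-1$. In particular, since $w' > w$ gives $\textit{round}(w,4) \le \textit{round}(w',1)-1$, this holds for all $r$ in the range $[R_4,\ \textit{round}(w',1)-1]$. \textbf{(b)} Any two subsets of $V_r$ of size $\ge 2f+1$ intersect, because $2(2f+1) > 3f+1 \ge |V_r|$.

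Then the induction itself: I claim that for every $r$ with $R_4 \le r \le \textit{round}(w',1)$ we have $|P_r| \ge 2f+1$, and moreover $P_r = V_r$ once $r > R_4$. The base $r = R_4$ is the commit rule. For the step, assume $|P_r| \ge 2f+1$ and take any $x \in V_{r+1}$; its strong edges form a subset of $V_r$ of size $\ge 2f+1$, so by (b) it shares a vertex $y$ with $P_r$, and since $x$ has a strong edge to $y$ while $y$ has a strong path to $v$, $x$ has a strong path to $v$, i.e. $x \in P_{r+1}$. Thus $P_{r+1} = V_{r+1}$, and by (a) $|P_{r+1}| = |V_{r+1}| \ge 2f+1$, closing the induction. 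Taking $r+1 = \textit{round}(w',1)$ and using $u \in V_{\textit{round}(w',1)}$ gives that there is a strong path from $u$ to $v$. Finally I would note that this path lies entirely in the strong causal past of $u$, so by the downward-closure property (a) of $DAG_j$ it is present in $DAG_j$ by the time $u$ is; since $\textit{strong\_path}$ only inspects $\bigcup_r DAG_j[r]$, the call $\textit{strong\_path}(u,v)$ returns true when $p_j$ processes wave $w'$.

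The main obstacle I anticipate is fact (a): making the ``at least $2f+1$ vertices in every intermediate round'' claim precise enough that the induction never stalls, and getting the wave-to-round arithmetic ($w' > w \Rightarrow \textit{round}(w,4) \le \textit{round}(w',1)-1$, so the rounds $R_4, R_4+1, \dots, \textit{round}(w',1)$ form a valid chain) exactly right; once those are in place the quorum-intersection core and the causal-past visibility argument are routine.
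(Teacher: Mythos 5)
Your proof is correct, and it takes a somewhat different route from the paper's --- in fact it supplies an induction that the paper's own proof leaves implicit. The paper's middle step invokes its quorum-intersection claim (\Cref{claim:quorumIntersection}) to place a set of $f+1$ vertices with strong paths to $v$ inside \emph{every} correct process's $DAG_j[\textit{round}(w,4)]$, and then asserts in a single sentence that any vertex of a later wave has a strong path into that set. You instead prove the stronger global invariant that \emph{every} vertex in \emph{every} round after $\textit{round}(w,4)$ has a strong path to $v$, by intersecting each new vertex's $\geq 2f+1$ strong edges with the $\geq 2f+1$ already-reaching vertices of the previous round inside a universe of size $\leq 3f+1$. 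This buys a fully spelled-out round-by-round argument (the paper's final sentence secretly needs exactly this induction to bridge the rounds between $\textit{round}(w,4)$ and $\textit{round}(w',1)$), at the cost of having to establish your fact (a) that the intermediate rounds are populated, which you correctly derive from $u$'s own strong causal past rather than from any process completing those rounds. One cosmetic caveat: define $V_r$ as the round-$r$ vertices appearing in \emph{some correct process's DAG} rather than merely delivered ones, so that every $x \in V_{r+1}$ is guaranteed to have all its strong edges inside $V_r$; integrity of the reliable broadcast still bounds $|V_r| \leq n$, and downward closure of the DAGs keeps everything relevant to $u$ inside this smaller universe. Your closing observation that the resulting path is actually visible to $p_j$ (because $DAG_j$ is downward closed and $\textit{strong\_path}$ only inspects $\bigcup_r DAG_j[r]$) is a step the paper omits entirely and is correctly made.
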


We show below how we leverage the above lemma to satisfy the
total order property, but first, we give an intuition for liveness,
i.e., the validity and agreement properties.
Our protocol achieves progress in a constant number of waves, in
expectation, by guaranteeing that for every wave, the probability for
every correct process to commit the wave leader is at least $2/3$.
To ensure this, we borrow the technique from the \emph{common-core
abstraction}~\cite{canetti1996studies}, which guarantees that after
three rounds of all-to-all sending and collecting accumulated sets of
values, all correct processes have at least $2f+1$ common values. The
set of these values is referred to as the common-core.
In respect to our DAG, we prove in Section~\ref{sec:analysis}
the following lemma:

\begin{restatable}{lemma}{commonCore}
\label{lem:commonCore}
	Let $p_i$ be a correct process that completes wave $w$.
	Then there is a set $V \subseteq DAG_i[\textit{round}(w,1)]$ and a set $U \subseteq DAG_i[\textit{round}(w,4)]$ s.t. $|V| \geq 2f+1$, $|U| \geq 2f+1$ and $\forall v \in V,\forall u \in U \colon
	\textit{strong\_path}(u,v)$.
\end{restatable}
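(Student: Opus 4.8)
The plan is to adapt the common-core argument of~\cite{canetti1996studies} to the strong-edge structure of the DAG. Throughout, write $r_j = \textit{round}(w,j)$ for $j\in\{1,2,3,4\}$; these are four consecutive DAG rounds. The first step is to record a structural consequence of the DAG-construction rule that a vertex is inserted into $DAG_i[\cdot]$ only after all endpoints of its strong and weak edges are already present, together with the fact that every vertex carries at least $2f+1$ strong edges, all of them into the immediately preceding round. Since $p_i$ completes wave $w$, i.e.\ $|DAG_i[r_4]|\ge 2f+1$, walking down one round at a time shows that $|DAG_i[r_j]|\ge 2f+1$ for every $j$, and that for every vertex in $DAG_i[r_j]$ all of its (at least $2f+1$) strong-edge endpoints lie in $DAG_i[r_{j-1}]$ for $j\in\{2,3,4\}$. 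In particular every vertex set involved has size in $[2f+1,\,3f+1]$.

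The second step extracts a single ``popular'' vertex in round $r_2$ by double counting. For a round-$r_3$ vertex $x$ let $M(x)\subseteq DAG_i[r_2]$ be its set of strong-edge endpoints, so $|M(x)|\ge 2f+1$. Then $\sum_{y\in DAG_i[r_2]} \bigl|\{x\in DAG_i[r_3]: y\in M(x)\}\bigr| = \sum_{x\in DAG_i[r_3]}|M(x)| \ge (2f+1)\,|DAG_i[r_3]|$, and since there are at most $3f+1$ summands on the left, some $y^\star\in DAG_i[r_2]$ has $|X^\star|\ge (2f+1)\,|DAG_i[r_3]|/(3f+1)$, where $X^\star:=\{x\in DAG_i[r_3]: y^\star\in M(x)\}$. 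Using $|DAG_i[r_3]|\le 3f+1$, a one-line calculation gives $(2f+1)\,|DAG_i[r_3]|/(3f+1)\ge |DAG_i[r_3]|-2f$, hence $|X^\star|\ge |DAG_i[r_3]|-2f$.

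The third step sets $V := y^\star.\textit{strongEdges}$, which is a subset of $DAG_i[r_1]$ with $|V|\ge 2f+1$, and $U := DAG_i[r_4]$, with $|U|\ge 2f+1$, and then verifies the bipartite reachability by quorum intersection inside round $r_3$. Fix $u\in U$ and $v\in V$. The set $L(u):=u.\textit{strongEdges}\subseteq DAG_i[r_3]$ has $|L(u)|\ge 2f+1$, so $|L(u)\cap X^\star|\ge |L(u)|+|X^\star|-|DAG_i[r_3]|\ge 1$; pick any $x\in L(u)\cap X^\star$. Then $u\to x\to y^\star\to v$ is a path along strong edges whose intermediate vertices all lie in $DAG_i$ by the structural step, so $\textit{strong\_path}(u,v)$ is true for all such $u,v$, which is exactly the claim.

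The only routine part is the counting, and the single thing to be careful about is the chain of size constraints: every vertex set in play must have size in $[2f+1,3f+1]$, so that the pigeonhole bound on $|X^\star|$ is strong enough to guarantee a nonempty intersection with every $(2f+1)$-sized strong-edge set leaving round $r_4$; relatedly, one must confirm the three-hop path actually lives inside $DAG_i$, which is where the DAG-construction invariant is invoked. It is worth noting that weak edges and the identity of the wave leader never enter this argument: only the strong-edge quorums of the wave's four rounds are used, and the reason a wave must span these three strong-edge layers is precisely to keep both $|V|$ and $|U|$ at $2f+1$ while still letting $U$ be all of $DAG_i[r_4]$.
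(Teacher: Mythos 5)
Your proof is correct, but it takes a genuinely different route from the paper's. The paper proves this lemma by reduction: it maps the first three rounds of wave $w$ onto the three communication stages of the common-core abstraction of Canetti's thesis (as adapted to the Byzantine setting by Dolev et al.), invokes that abstraction's guarantee of a core of $2f+1$ common inputs as a black box, and then notes that every vertex in $DAG_i[\textit{round}(w,4)]$ must reference at least $f+1$ of the round-$3$ vertices carrying the core. Your argument is instead self-contained: the double count over strong edges from $DAG_i[\textit{round}(w,3)]$ into $DAG_i[\textit{round}(w,2)]$ produces a popular vertex $y^\star$, you take $V$ to be $y^\star$'s own strong-edge set and $U$ to be all of $DAG_i[\textit{round}(w,4)]$, and quorum intersection inside round $\textit{round}(w,3)$ supplies the three-hop strong path. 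The arithmetic is right ($(2f+1)m/(3f+1)\ge m-2f$ for $m\le 3f+1$, and $(2f+1)+(m-2f)-m=1$), and the structural invariants you use --- every vertex in $DAG_i$ has at least $2f+1$ strong edges into the previous round, all already present in $DAG_i$ --- are exactly what the construction enforces, including for Byzantine-sourced vertices. What the paper's route buys is the connection to a known primitive and a cross-process statement (it also argues the core $V$ appears in the DAG of every correct process completing the wave, which is more than the lemma literally asserts and is what the surrounding liveness argument uses); what yours buys is an elementary argument with no external reference that makes transparent why a wave needs exactly four rounds: one layer below $y^\star$ to get $|V|\ge 2f+1$ and one layer above $X^\star$ to let $U$ be the entire last round. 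One presentational nit: fix a single snapshot of $DAG_i$ (say, the moment the wave completes) when defining $m$, $X^\star$, and $U$, since the intersection bound would degrade if $DAG_i[\textit{round}(w,3)]$ were allowed to grow between the counting step and the intersection step.
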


Note that by the commit rule, if the leader of a wave $w$ belongs to
the set $V$ (from the lemma statement), then $p_i$ commits the leader
once it completes $w$.
So to deal with an adversary that totally controls the network, 
parties flip the global coin only after they complete $w$
(\Cref{alg:SMR:getWaveLeader}).
Therefore, by the coin's unpredictability property, the probability of
the adversary to guess the wave's leader before the point after which
it cannot manipulate the set $V$ is less than $\frac{1}{n} +
\epsilon$.
Thus, with a probability of at least $2/3-\epsilon$, $w$'s leader
is in the set $V$ and $p_i$ commits it.
Thus, in expectation, correct processes commit every $3/2$ waves.

To satisfy total order, we leverage the property proven in
\Cref{lem:commit} to make sure all processes commit the same waves' leaders.
Once we find a leader to commit in a wave $w$ we check if it is
possible that some process committed a wave in between $w$ and the
previous wave we committed, let it be $w'$. 
We do this iteratively in
Lines~\ref{alg:SMR:orderLadersForLoop}-\ref{alg:SMR:orderLadersForLoopEnd},
we first check if it is possible that some process committed the leader
of $w-1$.
We do it by checking if there is a strong path
from the leader of wave $w$ to the leader of wave $w-1$ in our local
DAG (\Cref{alg:DAGabstraction:checkpath}).
If no such path exists, by Lemma~\ref{lem:commit}, no correct process will
ever commit $w-1$.
Otherwise, we choose to commit $w-1$ before $w$.
Now, if such a path indeed exists, we recursively check if it is
possible that some process committed a wave in between $w - 1$ and $w'$.
Otherwise, if no such path exists, we check if there is a path from the
leader of wave $w$ to the leader of wave $w-2$ and continue in the
same way. 
The recursion ends once we reach a wave that we previously committed,
$w'$ in our example.
An illustration of this process is given in~\Cref{fig:decisionRule}.

Since vertices are reliably broadcast and since we never add a vertex $v$ to the DAG before we add all the
vertices $v$ points to with strong or weak edges, two correct processes
always have the same causal history for any vertex they both have in their DAGs.
Therefore, once we agree on a sequence of leaders,
all that is left to do is to order the causal histories of the leaders in some deterministic order.
To this end, we go through the waves' committed leaders one-by-one and
$a\_deliver$, in some deterministic order, all the transaction blocks
in their causal histories that we did not previously deliver
(procedure
\textit{order\_vertices} in \Cref{alg:SMR:decideValuesProcedure}).
The causal history of a wave leader vertex $v$ in $DAG_i$ is the set
$\{ u \in DAG_i \mid \textit{path}(v,u) \}$.

The purpose of the weak edges is to satisfy the Validity
property.
Recall that strong edges might not point to all vertices from the
previous round in the DAG because we might advance the round before we
deliver all the broadcasts of that round (we advance the round once at least $2f+1$ vertices are added to the DAG).
Therefore, without the weak edges, slow processes may not be able to
get vertices from higher rounds to point to theirs.
So to satisfy Validity, each correct process, when creating a new
vertex, adds weak edges to all vertices in its local DAG to
which it otherwise does not point.
\section{Analysis}
\label{sec:analysis}

In \Cref{sub:correctness} we prove the correctness of \sys, and in
\Cref{sub:complexity} we analyze the communication and time complexity.

\subsection{Correctness}
\label{sub:correctness}

We show that \sys satisfies the properties of the BAB problem, as
defined in \Cref{sec:problemDef}.


\begin{proposition}
\sys satisfies the integrity property of the BAB problem.
\end{proposition}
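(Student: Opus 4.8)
The plan is to trace the single place in the protocol where an \texttt{a\_deliver} event is produced and argue that no block/round/source triple can be output twice. Recall that Integrity for BAB requires: for each round $r$ and process $p_k$, a correct process $p_i$ outputs $\textit{a\_deliver}_i(m,r,p_k)$ at most once, regardless of $m$. The only line that emits an \texttt{a\_deliver} is \Cref{alg:SMR:decide} inside \textit{order\_vertices}, and it does so for each vertex $v'$ in \textit{verticesToDeliver} with the triple $(v'.\textit{block}, v'.\textit{round}, v'.\textit{source})$. So it suffices to show (i) that \textit{order\_vertices} is invoked on each vertex at most once across the whole run, and (ii) that distinct vertices in $DAG_i$ carry distinct $(\textit{round},\textit{source})$ pairs.

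For (i), I would appeal to the \textit{deliveredVertices} set: before a vertex $v'$ is delivered on \Cref{alg:SMR:decide} it is added to \textit{deliveredVertices} on the following line, and the construction of \textit{verticesToDeliver} explicitly filters out any $v' \in \textit{deliveredVertices}$. Since \textit{deliveredVertices} only ever grows, once $v'$ has been delivered it is never selected again, so each vertex object in $DAG_i$ triggers at most one \texttt{a\_deliver}. For (ii), the key fact is that a vertex's round is supplied by the reliable-broadcast layer and its source is the broadcasting process; by the Integrity property of reliable broadcast, a correct process $p_i$ outputs $\textit{r\_deliver}_i(\cdot, r, p_k)$ at most once, hence $DAG_i$ contains at most one vertex with a given $(\textit{round}=r, \textit{source}=p_k)$ pair. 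Combining (i) and (ii): if $p_i$ outputs $\textit{a\_deliver}_i(m,r,p_k)$ and $\textit{a\_deliver}_i(m',r,p_k)$, both came from vertices with round $r$ and source $p_k$, hence from the same vertex $v'$, hence $m = m' = v'.\textit{block}$ and, by (i), the two outputs are the same event.

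The only subtlety — and the one step I would be careful about — is ruling out the possibility that the \emph{same} vertex is processed by two different invocations of \textit{order\_vertices} (e.g.\ if the same leader were pushed onto \textit{leadersStack} in two different waves, or a causal-history vertex re-examined in a later wave). This is handled uniformly by the \textit{deliveredVertices} guard, which is global across invocations, so no separate argument about the leader-stack discipline is needed; I would just state this explicitly rather than rely on the reader inferring it. Everything else is routine bookkeeping.
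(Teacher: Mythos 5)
Your proof is correct and follows essentially the same route as the paper's: the core step in both is that the integrity of the reliable broadcast forbids two distinct vertices in $DAG_i$ with the same $(\textit{round},\textit{source})$ pair, so any $\textit{a\_deliver}_i(\cdot,r,p_k)$ output traces back to a unique vertex. Your part (i) --- the $\textit{deliveredVertices}$ guard ensuring each vertex object triggers at most one delivery across all invocations of \textit{order\_vertices} --- is a point the paper's one-paragraph proof leaves implicit, so your version is in fact slightly more complete.
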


\begin{proof}

By the code (\Cref{alg:SMR:decide}), if a correct process $p_i$ outputs
$a\_deliver(b,r,p_k)$, then there is a vertex $v$ in $DAG_i$ s.t.\
$(b,r,p_k) = (v.block, v.round, v.source)$.
Integrity follows from the fact that all vertices are
reliably broadcast, and thus by integrity property of the reliable
broadcast there are no two different vertices $u,u'$ in $DAG_i$ s.t. $u.round =
u'.round$ and $u.source = u'.source$.
\end{proof}




\begin{claim} \label{lem:causalHistory}
	When a correct process $p_i$ adds a vertex $v$ to its $DAG_i$ (\Cref{alg:DAGabstraction:addVertexToDAG}), all of $v$'s causal history is already in $DAG_i$.
\end{claim}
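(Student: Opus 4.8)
\emph{Proof plan.} The plan is to prove the claim by induction on $v.\textit{round}$, the only non-trivial ingredient being the guard on \Cref{alg:DAGabstraction:prevVerticesIf}: a correct process moves a vertex out of its buffer into $DAG_i$ only after every vertex referenced by a strong or weak edge of $v$ is already present in $DAG_i$, and the algorithm never deletes vertices from $DAG_i$.

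First I would fix the relevant definitions. The causal history of $v$ in $DAG_i$ is $\mathrm{CH}(v) \triangleq \{u \in \bigcup_{r \geq 1} DAG_i[r] \mid \textit{path}(v,u)\}$, and $\textit{path}$ traverses only strong and weak edges. By construction a strong edge of a round-$r$ vertex points to round $r-1$ and a weak edge to some round $< r-1$, so every vertex on a path out of $v$ other than $v$ itself lies in a strictly smaller round; in particular $\mathrm{CH}(v)$ is finite and contains no vertex of the hardcoded set $DAG_i[0]$. Unrolling the first edge of a path also gives the decomposition
\[
\mathrm{CH}(v) = \{v\} \cup \bigcup_{v' \in v.\textit{strongEdges} \cup v.\textit{weakEdges}} \mathrm{CH}(v'),
\]
valid once every out-neighbour of $v$ has been added to $DAG_i$; for a round-$1$ vertex, which has no weak edges and only strong edges into $DAG_i[0]$, this simply reads $\mathrm{CH}(v) = \{v\}$.

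Then I would run the induction. In the base case a round-$1$ vertex $v$ has $\mathrm{CH}(v) = \{v\}$, which is in $DAG_i$ the instant $v$ is added. For the inductive step, let $v$ have round $r > 1$ and be added to $DAG_i$. By the guard on \Cref{alg:DAGabstraction:prevVerticesIf}, every $v' \in v.\textit{strongEdges} \cup v.\textit{weakEdges}$ is already in $DAG_i$, was therefore added earlier in the execution, and has round $< r$; by the induction hypothesis all of $\mathrm{CH}(v')$ was in $DAG_i$ when $v'$ was added, hence — since vertices are never removed — it is still in $DAG_i$ when $v$ is added. Substituting into the decomposition yields $\mathrm{CH}(v) \subseteq DAG_i$ at the moment $v$ is added.

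The step I expect to need the most care is that $\textit{path}$, and hence the causal history, is evaluated against the \emph{current} $DAG_i$, so a priori it is a growing, time-dependent set; the claim really asserts that it is already \emph{saturated} when $v$ joins $DAG_i$. This too follows from the decomposition: any vertex that becomes reachable from $v$ at a later time is reachable through some out-neighbour $v'$ of $v$, whose edge set is immutable and whose causal history is, by the induction hypothesis, already complete when $v'$ (and hence $v$) was added — so nothing new can ever enter $\mathrm{CH}(v)$ afterwards. This saturation property is precisely what is used subsequently to argue that two correct processes hold identical causal histories for any vertex they both have in their DAGs.
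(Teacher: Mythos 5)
Your proof is correct and takes essentially the same route as the paper's: both rest on the guard at \Cref{alg:DAGabstraction:prevVerticesIf} together with a straightforward induction (the paper inducts on the order in which vertices are inserted into $DAG_i$ rather than on the round number, but this is immaterial since edges point to strictly earlier rounds). Your explicit decomposition of the causal history and the saturation argument make precise a point the paper leaves implicit, but they do not change the substance of the argument.
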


\begin{proof}
We prove this claim by induction on the execution of every correct process $p_i$.
Denote by $v_k$ the $k$-th vertex that $p_i$ adds to $DAG_i$.
We show that for every $k \in \mathbb{N}$, after $v_k$ is added to the DAG, the causal histories of all the vertices in the set $\{v_1, \ldots, v_k\}$, and in particular $v_k$, are in $DAG_i$.

In the base step of the induction, there are no vertices in the DAG, and the property vacuously holds.
Next, assume that after $v_k$ is added to the DAG at process $p_i$, all the causal histories of all the vertices in the set $V = \{v_1, \ldots, v_k\}$ are already in $DAG_i$.

For $v_{k+1}$ to be added to the DAG at process $p_i$, its strong and
weak edges must reference vertices that are already in $DAG_i$
(\Cref{alg:DAGabstraction:prevVerticesIf}), i.e., $v_{k+1}$'s edges are
only to vertices in $V$.
Since all the vertices in $V$ already have their causal histories in
the DAG, when $v_{k+1}$ is added to the DAG, its causal history is in
the DAG as well, and we are done.
\end{proof}

\begin{claim} \label{claim:consistentDAG}
If a correct process $p_i$ adds a vertex $v$ to its $DAG_i$, then eventually all correct processes add $v$ to their DAG.
\end{claim}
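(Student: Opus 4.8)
The plan is to prove, by strong induction on $r \triangleq v.\textit{round}$, the following slight strengthening of the claim: for every $r \in \mathbb{N}$, (a) if some correct process adds a vertex $u$ with $u.\textit{round}=r$ to its DAG, then every correct process eventually adds $u$ to its DAG; and (b) if some correct process reaches local round $>r$, then every correct process eventually reaches local round $>r$. Part (a) is what we want; part (b) is needed because a correct process only moves a buffered vertex $v$ into its DAG while its local round is at least $v.\textit{round}$ (the guard on \Cref{alg:DAG:goThroughBuffer}). The base case $r=0$ is immediate: $DAG_i[0]$ is the same hard-coded set of $2f+1$ vertices at every correct process, so (a) is vacuous, the round-$0$ guard on \Cref{alg:DAG:checkNewRound} is already satisfied everywhere, and since each correct process keeps $a\_bcast$-ing blocks it eventually creates and reliably broadcasts a round-$1$ vertex, i.e.\ reaches local round $>0$.

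For the inductive step, assume (a) and (b) hold for all rounds $<r$, and let $p_i$ add $v$ with $v.\textit{round}=r$ to $DAG_i$. Since $p_i$ added $v$ (\Cref{alg:DAGabstraction:addVertexToDAG}), $v$ was previously placed in $p_i$'s buffer, so $p_i$ output $\textit{r\_deliver}_i(v,r,p_k)$ for some $p_k$ and $|v.\textit{strongEdges}| \geq 2f+1$ passed the check on \Cref{alg:DAGabstraction:reliableIf}. By the Agreement property of reliable broadcast, every correct process $p_j$ eventually outputs $\textit{r\_deliver}_j(v,r,p_k)$ and hence, since $|v.\textit{strongEdges}| \geq 2f+1$, places $v$ in its own buffer. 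By \Cref{lem:causalHistory}, at the moment $p_i$ added $v$ every vertex reachable from $v$ was already in $DAG_i$; in particular every $u \in v.\textit{strongEdges} \cup v.\textit{weakEdges}$ was added by $p_i$, and each such $u$ has $u.\textit{round} < r$. By the induction hypothesis (a), every correct $p_j$ eventually adds all of these predecessors to its DAG. Granting for the moment that $p_j$'s local round eventually reaches $r$, there is then an iteration of the loop on \Cref{alg:DAG:goThroughBuffer} at $p_j$ in which $v$ is buffered, $v.\textit{round} \leq$ its local round, and all of $v$'s predecessors are in $DAG_j$; so $p_j$ executes \Cref{alg:DAGabstraction:addVertexToDAG} and adds $v$, proving (a) for $r$.

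It remains to establish (b) for round $r-1$, which in particular makes $p_j$'s local round reach $r$. Because $p_i$ reached local round $\geq r$ before adding $v$, it completed round $r-1$, so it had at least $2f+1$ vertices of round $r-1$ in $DAG_i[r-1]$ (these include the $2f+1$ targets of $v$'s strong edges). By (a) for round $r-1$, every correct $p_j$ eventually has those $2f+1$ vertices in $DAG_j[r-1]$; by (b) for round $r-2$ (trivial when $r\le 1$), $p_j$ eventually reaches local round $\geq r-1$; once both hold, the guard on \Cref{alg:DAG:checkNewRound} is met at round $r-1$ and, using that $p_j$ always eventually has a block to propose, $p_j$ advances to local round $\geq r$, which is (b) for $r-1$. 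The main obstacle is exactly this entanglement: neither (a) nor (b) survives in isolation, since placing a high-round vertex into the DAG requires a large local round, while advancing the local round requires enough lower-round vertices to have already propagated. Folding both into one strengthened induction resolves it; the rest is a routine appeal to reliable-broadcast Agreement/Validity and to \Cref{lem:causalHistory}.
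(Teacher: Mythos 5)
Your proof is correct and follows the same skeleton as the paper's: induction on the round number, using reliable-broadcast Agreement to get $v$ delivered and buffered everywhere, the locality of the $|v.\textit{strongEdges}|\geq 2f+1$ check, and the induction hypothesis to get $v$'s predecessors into every correct DAG. Where you genuinely diverge is in strengthening the induction hypothesis with part~(b), the claim that every correct process's local round counter eventually advances past $r$. This addresses a real gap that the paper's proof passes over silently: the loop guard on \Cref{alg:DAG:goThroughBuffer} only considers buffered vertices with $v.\textit{round}\leq r$ for the \emph{local} round counter $r$, so having $v$ buffered with all predecessors present is not by itself enough — the process must also have completed round $v.\textit{round}-1$ so that the guard on \Cref{alg:DAG:checkNewRound} fires and its counter catches up. Your observation that (a) and (b) are mutually dependent (round advancement needs $2f+1$ propagated lower-round vertices; vertex insertion needs the round counter to have advanced) and must be folded into a single induction is exactly right, as is your use of the standing assumption that \textit{blocksToPropose} is never permanently empty so that \textit{create\_new\_vertex} does not block forever. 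The paper's version is shorter but relies on the reader to supply this round-advancement argument; yours is self-contained and, if anything, is the proof the paper should have written.
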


\begin{proof}
By induction on rounds, for process $p_i$ to add a vertex $v$ in round $r$ to its $DAG_i$, first $v$ needs to be delivered to $p_i$ by the reliable broadcast layer (\Cref{alg:DAGabstraction:reliableDeliver}), and by the agreement of the reliable broadcast, $v$ will be eventually delivered to all other correct processes.

Next, $v$ has to be added to the $\textit{buffer}$ variable at $p_i$, and this is done if the process who broadcast $v$ added the correct $v.\textit{source}$ and $v.\textit{round}$ which are verified through the guarantees of the reliable broadcast layer (\Cref{alg:DAGabstraction:reliableIf}).
Therefore these checks will also pass at any other correct process when $v$ is delivered to it.
Finally, $p_i$ checks that the vertex has at least $2f+1$ strong edges to vertices in round $v.\textit{round}-1$.
If $v$ passes this check in $p_i$ then it will pass these two checks at any other correct process, since these checks are computed locally based on $v$'s fields ($v.\textit{block}$ and $v.\textit{strongEdges}$).

Lastly, after $v$ is added to the \textit{buffer}, for $p_i$ to add $v$ to its $DAG_i$, $p_i$ also checks that it has all the vertices that $v$ is referencing to (in $V = v.\textit{strongEdges} \cup v.\textit{weakEdges}$) in its $DAG_i$ as well (\Cref{alg:DAGabstraction:prevVerticesIf}).
By the induction assumption, all correct processes' DAGs contain the same vertices in rounds $<r$.

Thus, this ensures that any vertex $v$ that appears in any round at $DAG_i$ of some correct process, will eventually also appear in $DAG_j$ of every other correct process $p_j$.
\end{proof}

%

\begin{claim} \label{claim:quorumIntersection}
If for some correct process $p_i$ there is a round $r$ with a set $V$ of at least $2f+1$ vertices in $DAG_i[r]$ s.t. $\forall v \in V \colon \textit{strong\_path}(v,u)$ to some vertex $u \in DAG_i$, then every other process $p_j$ that completes round $r$ has a set $V' \subseteq DAG_j[r]$ s.t. $|V'| \geq f+1$ and $\forall v' \in V' \colon \textit{strong\_path}(v',u)$.
\end{claim}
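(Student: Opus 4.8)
The plan is to run a standard quorum‑intersection argument and then argue that a strong path, once present in one correct process's DAG, is "stable" and reappears in any other correct process's DAG that contains its endpoint. The first observation I would make is that every round $r$ holds at most $n = 3f+1$ distinct vertices: by the integrity and agreement properties of the reliable broadcast (i.e., no equivocation), for each source $p_k$ there is at most one round‑$r$ vertex that any correct process ever delivers, and two round‑$r$ vertices with the same source that appear in the DAGs of two correct processes are literally the same vertex (same block, same strong and weak edges). Hence $V$ and $DAG_j[r]$ may both be regarded as subsets, indexed by source, of one common universe of size at most $n$.

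Next I would use that $p_j$ completes round $r$, so $\left|DAG_j[r]\right| \geq 2f+1$, together with the hypothesis $|V| \geq 2f+1$. Inclusion–exclusion in a universe of size $n = 3f+1$ yields $\left|V \cap DAG_j[r]\right| \geq (2f+1) + (2f+1) - (3f+1) = f+1$. I would then set $V' \triangleq V \cap DAG_j[r]$; this is a subset of $DAG_j[r]$ of size at least $f+1$, which already gives the cardinality part of the claim, so it remains only to verify the strong‑path property for each $v' \in V'$.

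For the strong‑path property: fix $v' \in V'$. Since $v' \in V$, $\textit{strong\_path}(v',u)$ returns true when evaluated against $DAG_i$, so there is a chain of strong edges from $v'$ down to $u$; in particular $u$ and all intermediate vertices on this chain lie in the causal history of $v'$. Because $v' \in DAG_j[r]$, \Cref{lem:causalHistory} guarantees that the entire causal history of $v'$ — hence $u$ and every vertex on the chain — is already present in $DAG_j$. As the strong edges of a vertex are part of its reliably broadcast data and therefore identical in both $DAG_i$ and $DAG_j$, the same strong‑edge chain exists in $DAG_j$, so $\textit{strong\_path}(v',u)$ returns true there as well. This finishes the argument.

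The main obstacle is not the counting but the bookkeeping that justifies the two cross‑process identifications: that $V$ and $DAG_j[r]$ live in a common universe of size $n$ (so that intersecting them is meaningful), and that a strong path witnessed in one correct process's DAG persists in any other correct process's DAG once that process holds the source endpoint $v'$. Both reduce to the integrity/agreement guarantees of the reliable broadcast and to \Cref{lem:causalHistory}; once these are stated carefully, the rest is the inclusion–exclusion bound.
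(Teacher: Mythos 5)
Your proposal is correct and follows essentially the same route as the paper's proof: define $V' = V \cap DAG_j[r]$, obtain $|V'| \geq f+1$ by quorum intersection over the at-most-$3f+1$ non-equivocating round-$r$ vertices, and invoke \Cref{lem:causalHistory} to conclude that the strong paths to $u$ are present in $DAG_j$. Your write-up is somewhat more explicit than the paper's (in particular about why the two DAGs share a common vertex universe and why the strong-edge chain itself, not just $u$, persists in $DAG_j$), but the argument is the same.
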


\begin{proof}
Let $V' = V \cap DAG_j[r]$.
Round $r$ is complete for $p_i$ and $p_j$ when their DAGs have at least $2f+1$ vertices.
Therefore, when $p_i$ and $p_j$ complete round $r$, $|V'| \geq f+1$ by a standard quorum intersection of $2f+1$ out of $3f+1$ possible vertices of round $r$ (due to the reliable broadcast, Byzantine processes cannot equivocate).
Since every $v' \in V'$ is already in $DAG_j$ when $p_j$ completes round $r$, then $u$ is in $DAG_j$ by $t$ as well (by \Cref{lem:causalHistory}), and there is a strong path between every $v' \in V'$ to $u$ in $DAG_j$.
\end{proof}

For the next part, we say a vertex $v$ is a wave $w$ vertex leader if $v$ is the return value of the \textit{get\_wave\_vertex\_leader} procedure (\Cref{alg:SMR:vertexFromCoin}).
Next, we say a process \emph{commits} a wave leader vertex $v$ when $v$ is popped from the stack  (\Cref{alg:SMR:stackPop}).

\begin{claim} \label{claim:oneVertexLeader}
In every wave, at most one vertex $v$ can be a wave leader vertex for all correct processes.
\end{claim}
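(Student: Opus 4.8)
The plan is to unwind the definition of a wave leader vertex and show it is uniquely determined. Recall from procedure \textit{get\_wave\_vertex\_leader} (\Cref{alg:SMR:vertexFromCoin}) that for wave $w$, a process first calls $\textit{choose\_leader}_i(w)$ to obtain a process $p_j$, and then returns the (unique) vertex $v \in DAG_i[\textit{round}(w,1)]$ with $v.\textit{source}=p_j$, or $\bot$ if no such vertex exists yet. So there are two sources of potential disagreement between correct processes: (i) they might disagree on which process $p_j$ the coin elects, and (ii) even agreeing on $p_j$, they might hold different vertices with source $p_j$ in round $\textit{round}(w,1)$.

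First I would dispense with (i) using the \textbf{Agreement} property of the global perfect coin: any two correct processes that invoke $\textit{choose\_leader}(w)$ receive the same return value $p_j$. Hence all correct processes that compute a wave-$w$ leader agree on the elected source process. Then I would dispense with (ii): a vertex's \textit{source} field is assigned from the identity of the $\textit{r\_bcast}$ caller as reported by the reliable broadcast layer (\Cref{alg:DAGabstraction:reliableDeliver}), and by the \textbf{Integrity} property of reliable broadcast, for a fixed round $r=\textit{round}(w,1)$ and a fixed sender $p_j$, a correct process delivers at most one message $(m,r,p_j)$. Consequently $DAG_i[\textit{round}(w,1)]$ contains at most one vertex with source $p_j$, for every correct $p_i$, and by \Cref{claim:consistentDAG} if one correct process has such a vertex then all correct processes eventually have the same one. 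Therefore the value returned by \textit{get\_wave\_vertex\_leader}$(w)$ is, whenever it is not $\bot$, the same vertex $v$ at every correct process, which is exactly the claim.

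The main obstacle — and it is a mild one — is being careful about the quantifier in the statement: the claim is not that every correct process agrees on the leader at the moment it completes wave $w$ (one process might still see $\bot$ while another already has $v$), but that there is at most one vertex that can ever play the role of wave-$w$ leader for a correct process. So the argument should emphasize that the coin fixes a single source $p_j$ and reliable broadcast fixes a single vertex with that source in round $\textit{round}(w,1)$, making the set of possible leader vertices a singleton (or empty). I expect this to be a two-or-three sentence proof in the paper, invoking precisely coin Agreement plus reliable-broadcast Integrity, with \Cref{claim:consistentDAG} available if one wants the stronger eventual-agreement flavor.
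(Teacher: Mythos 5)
Your proof is correct and follows essentially the same route as the paper's: coin \textbf{Agreement} fixes a single elected source $p_j$ per wave, and the non-equivocation guaranteed by reliable broadcast fixes at most one vertex with source $p_j$ in round $\textit{round}(w,1)$ across all correct processes. The only cosmetic difference is that you name the \textbf{Integrity} property where the paper cites the \textbf{Agreement} property of reliable broadcast (ruling out cross-process equivocation in fact uses both); your reading of the quantifier in the statement is the intended one.
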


\begin{proof}
For a vertex $v$ to be a wave leader vertex in wave $w$ it has to be the return value from the \textit{get\_wave\_vertex\_leader} procedure (\Cref{alg:SMR:vertexFromCoin}).
The procedure gets the wave's chosen process $p_j$ by the global coin, and checks if the $DAG_i$ at process $p_i$ has the vertex $v$ from $p_j$ in the first round of wave $w$.
Due to the agreement property of the global perfect coin, the same process $p_j$ is chosen for all correct processes, and because of the agreement property of the reliable broadcast, Byzantine processes cannot equivocate.
\end{proof}

%

\begin{claim} \label{claim:increasingCommitOrder}
If a correct process $p_i$ commits  wave leader vertex $v_1$ in wave $w_1$ and after that $p_i$ commits vertex $v_2$ in wave $w_2$, then $w_1 < w_2$.
\end{claim}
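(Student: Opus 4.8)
The plan is to trace how the $\textit{leadersStack}$ is populated and emptied, and argue that the round numbers of committed leaders strictly increase over successive $\textit{order\_vertices}$ calls. First I would observe that within a single invocation of the $\textit{wave\_ready}(w)$ handler, the leaders pushed onto the stack are, in push order, the leader of $w$ followed by leaders of waves $w-1, w-2, \ldots$ (a strictly decreasing sequence of wave numbers, hence of round numbers via $\textit{round}(\cdot,1)$), and they are popped in LIFO order inside $\textit{order\_vertices}$. So within one handler call, the committed leaders come out in \emph{increasing} wave-number order, giving $w_1 < w_2$ when both are committed in the same call.

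Next I would handle the case where $v_1$ and $v_2$ are committed in two different invocations of the handler, say for waves $\bar{w}$ and $\bar{w}'$ with $\bar{w} < \bar{w}'$ (the handler is triggered by $\textit{wave\_ready}$ signals, which the DAG layer emits with increasing wave numbers, so the handler that commits $v_1$ runs before the one that commits $v_2$, and we can take $\bar w<\bar w'$). The key variable is $\textit{decidedWave}$: after the handler for $\bar w$ finishes committing, it sets $\textit{decidedWave} \gets \bar w$ (\Cref{alg:SMR:decidedWaveUpdate}), and crucially this happens only if something was actually committed. Then I would argue (i) every leader committed in the $\bar w$-handler has wave number $\le \bar w$, because the $\textit{for}$ loop at \Cref{alg:SMR:orderLadersForLoop} only ranges down from $\bar w - 1$ and the initial push is the leader of $\bar w$ itself; and (ii) every leader committed in the $\bar w'$-handler has wave number $> \textit{decidedWave}$ at the time that handler runs, since the $\textit{for}$ loop ranges from $\bar w' - 1$ down to $\textit{decidedWave}+1$, and the top-of-stack push is the leader of $\bar w' > \bar w'-1 \ge \textit{decidedWave}$. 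Since $\textit{decidedWave}$ is monotonically nondecreasing and equals $\bar w$ (or something $\ge \bar w$) by the time the $\bar w'$-handler executes its loop, every wave number committed in the later handler strictly exceeds every wave number committed in the earlier one.

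Combining the two cases: if $v_1$ is committed strictly before $v_2$, then either they are committed in the same handler invocation (and LIFO ordering of the stack gives $w_1 < w_2$), or in distinct invocations (and the $\textit{decidedWave}$ barrier gives $w_1 \le \textit{decidedWave} < w_2$). Either way $w_1 < w_2$. I expect the main obstacle to be nailing down the within-a-single-handler ordering precisely — one must check that the stack is pushed in decreasing-wave order and popped in LIFO order so that the $a\_deliver$ outputs come out in increasing wave order, and that no two distinct committed leaders can share a wave number (which follows from \Cref{claim:oneVertexLeader}). Everything else is bookkeeping on the monotonicity of $\textit{decidedWave}$ and of the wave numbers signaled by $\textit{wave\_ready}$.
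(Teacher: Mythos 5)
Your proposal is correct and takes essentially the same route as the paper's proof: leaders are pushed onto the stack in strictly decreasing wave order (from $w$ down to $\textit{decidedWave}+1$) and popped in LIFO order, so commits within one $\textit{wave\_ready}$ invocation come out in increasing wave order, while the monotonically increasing $\textit{decidedWave}$ barrier guarantees that every wave committed in a later invocation strictly exceeds every wave committed earlier. Your version is somewhat more explicit about the two cases (same invocation vs.\ distinct invocations), but the underlying argument is identical.
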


\begin{proof}
A vertex is committed when it is popped from the stack (\Cref{alg:SMR:stackPop}).
Vertices are pushed to the stack in Lines \ref{alg:SMR:stackPush1} and \ref{alg:SMR:stackPush2}, which only happens in waves which vertices were not committed before, since the for loop goes down only to $\textit{decidedWave} + 1$ (\Cref{alg:SMR:orderLadersForLoop}), where \textit{decidedWave} is updated each time vertices are pushed to the stack to be the maximum wave in which vertices were committed (\Cref{alg:SMR:decidedWaveUpdate}).
This means that vertices are pushed to the stack in decreasing wave numbers.

Lastly, all the vertices in the stack are popped out and committed, and this is done in reverse order to the order that they were pushed to the stack, therefore, the wave numbers of committed waves are in an increasing order.
\end{proof}


\commitPath*

\begin{proof}
Since vertex $v$ is committed by process $p_i$ in wave $w$, the commit rule is met, i.e., at the end of wave $w$ there are at least $2f+1$ vertices in $DAG_i[\textit{round}(w,4)]$ with a strong path to $v$.
By \Cref{claim:quorumIntersection}, every correct process $p_j$ (whether it committs $v$ in $w$ or not) has a set $V$ of at least $f+1$ vertices in $DAG_j[\textit{round}(w,4)]$ with a strong path to $v$.
Any future vertex $v'$ from waves $w' > w$, including $u$, will have a strong path to at least one vertex in $V$, resulting in a strong path between $u$ and $v$.
\end{proof}

\begin{proposition}
\label{lem:agreement}
\sys satisfies the total order property of the BAB problem.
\end{proposition}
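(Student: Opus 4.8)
# Proof Proposal for the Total Order Property

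The plan is to show that any two correct processes deliver vertices in the same relative order, which by the structure of the \textit{order\_vertices} procedure reduces to showing they commit the same sequence of wave leaders and then expand causal histories identically. First I would establish that the sequence of committed leaders is consistent across correct processes: if a correct process $p_i$ commits the leader vertex $v$ of wave $w$, then every correct process $p_j$ eventually commits $v$ in wave $w$ as well. The forward direction uses \Cref{claim:quorumIntersection} together with eventual DAG convergence (\Cref{claim:consistentDAG}): since $p_i$ saw $2f+1$ vertices in $DAG_i[\textit{round}(w,4)]$ with a strong path to $v$, process $p_j$ eventually has at least $f+1$ such vertices, and since round $\textit{round}(w,4)$ eventually completes for $p_j$ with all these vertices present, $p_j$ will also see $2f+1$ such vertices once its DAG catches up — hence $p_j$'s commit rule at wave $w$ fires on $v$. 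The subtle point, handled by \Cref{lem:commit}, is the case where $p_j$ reaches wave $w$ before $v$ is in $DAG_j$: then $p_j$ does not commit in wave $w$ directly, but \Cref{lem:commit} guarantees that every later wave leader $u$ that $p_j$ does commit has $\textit{strong\_path}(u,v)$, so $v$ is pushed onto the stack in the for-loop of Lines~\ref{alg:SMR:orderLadersForLoop}--\ref{alg:SMR:orderLadersForLoopEnd} and thus committed.

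Next I would argue that the \emph{order} in which leaders are committed agrees across processes. By \Cref{claim:increasingCommitOrder}, each process commits leaders in strictly increasing wave order. Combined with the previous paragraph — all correct processes commit exactly the same set of wave leaders — this forces the committed-leader sequences to be identical as ordered sequences. A small lemma is needed here: if $p_i$ commits wave leader $v_w$ of wave $w$, then $v_w$ is also committed by $p_j$, and since both commit in increasing wave order and commit the same set, the subsequence of committed waves is the same for both; moreover each leader vertex is uniquely determined by its wave (\Cref{claim:oneVertexLeader}).

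Then I would handle the expansion of causal histories. When a leader $v$ is popped from the stack, \textit{order\_vertices} delivers exactly the vertices $v'$ with $\textit{path}(v,v')$ that have not yet been delivered, in a fixed deterministic order. By \Cref{lem:causalHistory} and \Cref{claim:consistentDAG}, the causal history $\{u : \textit{path}(v,u)\}$ is identical in $DAG_i$ and $DAG_j$ (both eventually contain $v$ together with its entire causal history, and the edge sets are fixed by reliable broadcast). The set $\textit{deliveredVertices}$ at the moment $v$ is processed equals the union of causal histories of all previously committed leaders — which is the same set for $p_i$ and $p_j$ by the preceding paragraphs — so the "new" vertices delivered for $v$ coincide, and the deterministic ordering within that set makes the two delivery orders agree. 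Chaining over all committed leaders yields identical global delivery orders, which is exactly total order: if $p_i$ outputs $a\_deliver_i(m,r,k)$ before $a\_deliver_i(m',r',k')$, then $(v.block,v.round,v.source)=(m,r,k)$ and $(v'.block,v'.round,v'.source)=(m',r',k')$ for vertices $v,v'$ whose relative delivery order is fixed identically at $p_j$.

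The main obstacle I anticipate is the careful bookkeeping around the stack-and-for-loop mechanism: one must verify that the recursion in Lines~\ref{alg:SMR:orderLadersForLoop}--\ref{alg:SMR:orderLadersForLoopEnd}, which only follows a strong path from the newly committed leader down through intermediate waves, actually recovers \emph{every} leader that some other correct process committed in an intervening wave, and does not skip one. This is precisely where \Cref{lem:commit} is load-bearing: a leader $v'$ of an intervening wave $w'$ was committed by some process only if $2f+1$ vertices in $DAG[\textit{round}(w',4)]$ had a strong path to it, which by \Cref{lem:commit} forces $\textit{strong\_path}(v,v')$ for the later leader $v$ — so the for-loop's test on \Cref{alg:DAGabstraction:checkpath} succeeds and $v'$ is caught. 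Conversely, if that test fails, \Cref{lem:commit} (contrapositive) guarantees no correct process ever commits $w'$, so skipping it is safe. Making this "no leader is skipped and no spurious leader is added" argument airtight, across the interleaving of waves at two different processes, is the crux of the proof.
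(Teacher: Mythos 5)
Your proposal is correct and follows essentially the same route as the paper's proof: uniqueness of the per-wave leader (\Cref{claim:oneVertexLeader}), increasing commit order (\Cref{claim:increasingCommitOrder}), \Cref{lem:commit} to show no committed leader is skipped by the stack-and-for-loop, and \Cref{lem:causalHistory} plus DAG convergence for the deterministic expansion of causal histories. You are in fact more explicit than the paper about why all correct processes commit the \emph{same set} of leaders (the paper asserts this almost without argument); the only minor imprecision is your suggestion that $p_j$'s commit rule at wave $w$ ``fires once its DAG catches up'' --- the rule is evaluated only once, upon $\textit{wave\_ready}(w)$ --- but this does not matter since your \Cref{lem:commit}-based for-loop argument already covers every case where the direct commit fails.
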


\begin{proof}
By \Cref{claim:oneVertexLeader}, each wave has only one vertex that can be committed.
By \Cref{claim:increasingCommitOrder} every correct process commits vertices in an increasing wave number.
By \Cref{claim:commitPath}, if a correct process $p_i$ commits a vertex $v$, then there is a strong path to $v$ from any vertex $u$ in future waves that might be committed.
By combining all the claims, if two correct processes commit the same wave leader vertices, they do so in the same order.

Once a correct process commits a wave vertex leader $v$, it atomically delivers all of $v$'s causal history in some deterministic order, which is identical for all other correct processes.
By \Cref{lem:causalHistory}, when $v$ is committed, all of $v$'s causal history is already in the DAG.
Thus, since all correct processes commit the same wave leader vertices in the same order, and since those vertices have the same causal histories, all correct processes that deliver vertices, do so in the same order.
\end{proof}

\commonCore*

\begin{proof}
First, we show that there is a set $V$, $|V| \geq 2f+1$ s.t. when $p_i$ completes $\textit{round}(w,3)$ and broadcasts a new vertex $v_4$ in $\textit{round}(w,4)$, then $v_4$ has a strong path to all the vertices in $V$.

To this end, we use the common-core abstraction, that first appeared in~\cite{canetti1996studies}, and was adapted (and proven) for the Byzantine case in~\cite{dolev2016garbage}.
The model for this abstraction is identical to our model.
Each correct process $p_i$ has some input value $v_i$, and it returns a set $V_i$ of input values from different processes.
The guarantee of the common-core abstraction is that there is a subset $V$ of at least $2f+1$ values, s.t. for each correct process $V \subseteq V_i$, i.e., there is a common core of at least $2f+1$ input values that appear in the returned sets of all the correct processes that complete the common-core abstraction.

The algorithm to realize the common-core abstraction consists of three rounds of communication: in the first round, each process sends its input value $v_i$, and then waits for $2f+1$ input values from other processes (including itself).
Denote this first set at process $p_i$ as $F_i$.

In the second stage, each process sends its $F_i$ set and waits until it receives $2f+1$ $F_j$ sets from other processes (including itself).
When this stage ends, process $p_i$ creates the union of all the $F_j$ sets it received.
Denote this set of sets for process $p_i$ as $S_i$.
In the third and last stage, process $p_i$ sends the set $S_i$ it created and again waits to receive $2f+1$ $S_j$ sets from other processes (including itself). 
When this stage ends, process $p_i$ returns the union of all the $S_j$ sets, denoted $T_i$, as the output of the common-core abstraction.

We show that the first three rounds of a wave $w$ can be mapped exactly to the three stages of the common-core algorithm.
Denote $r_1, r_2,r_3,r_4$ as $\textit{round}(w,1),\textit{round}(w,2),\textit{round}(w,3),\textit{round}(w,4)$, respectively.
When a correct process $p_i$ adds the vertex $v_1$ created in $r_1$ to $DAG_i[r_1]$, by \Cref{claim:consistentDAG}, eventually all other correct processes add $v_1$ to their DAG, which can be mapped to $p_i$ sending its input value to all other processes in the common-core algorithm.
Next, $p_i$ moves to round $r_2$ once it has at least $2f+1$ vertices in $r_1$, which is mapped to $p_i$ waiting for $2f+1$ input values from different processes in the common-core algorithm.
When $p_i$ enters $r_2$ it broadcasts a vertex $v_2$ that references all the vertices it has in $r_1$, which is equivalent to $p_i$ sending $F_i$ at the beginning of the second stage of the common-core algorithm.
In a similar way, when $p_i$ completes $r_2$ and enters $r_3$, it broadcasts $v_3$ which references all the vertices it has in $r_2$, which is equivalent to sending $S_i$ (by \Cref{lem:causalHistory}, when $v_3$ is added to $DAG_j[r_3]$ of some correct process $p_j$, then all the vertices $p_i$ has in $DAG_i[r_1]$ with a strong path from $v_3$ are in the $DAG_j[r_1]$ as well).
To complete the mapping, when $p_i$ completes $r_3$ and broadcasts $v_4$ in round $r_4$, then $v_4$ has in its causal history the same values that would have been in $T_i$ in the equivalent common-core algorithm.

Note that since Byzantine processes cannot equivocate, and since every round in the DAG has at least $2f+1$ vertices, any vertex that $p_i$ adds to $DAG_i[r_4]$ has to reference at least $f+1$ vertices that $v_4$ also references, even vertices sent from Byzantine processes.
Thus, based on the common-core guarantee, there is a set $V \subset DAG_i[r_1]$ s.t. $|V| \geq 2f+1$ and $\forall v \in V \colon \textit{strong\_path}(v_4,v)$, and also this set $V$ appears in the DAG of any other correct process $p_j$ that completes round $r_3$.
Next, when $p_i$ completes wave $w$, i.e., when it completes round $r_4$, it has in $DAG_i[r_4]$ at least $2f+1$ vertices, and each of those vertices has a path to each of the vertices in $V$, which concludes the proof.
\end{proof}

\begin{claim} \label{claim:eventualCommit}
For every correct process $p_i$ and for every wave $w$, the
expected number of waves, starting from $w$, until the commit rule is
met is equal to or smaller than $3/2 + \epsilon$.

\end{claim}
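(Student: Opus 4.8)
The plan is to show that a single wave commits with probability at least $2/3 - \epsilon$, independently of the adversary's behavior, so that the number of waves until a commit is stochastically dominated by a geometric random variable. First I would fix a correct process $p_i$ and a wave $w$, and consider the run from the moment $p_i$ completes wave $w$. By \Cref{lem:commonCore}, at that point there is a set $V \subseteq DAG_i[\textit{round}(w,1)]$ with $|V| \ge 2f+1$ and a set $U \subseteq DAG_i[\textit{round}(w,4)]$ with $|U| \ge 2f+1$ such that every vertex in $U$ has a strong path to every vertex in $V$. The key observation is that if the wave leader vertex of $w$ happens to be one of the vertices in $V$, then the commit rule at \Cref{alg:SMR:commitrule} is satisfied: the set $U$ witnesses at least $2f+1$ vertices in $DAG_i[\textit{round}(w,4)]$ with a strong path to the leader, so $p_i$ commits.

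Next I would argue that the leader is in $V$ with probability at least $1/3 - \epsilon$. The set $V$ corresponds to at least $2f+1$ distinct source processes, so if the coin for wave $w$ returns one of these sources, and that source's vertex is in $DAG_i[\textit{round}(w,1)]$ (which it is, by definition of $V$), then \textit{get\_wave\_vertex\_leader} returns a vertex in $V$. By the fairness property of the global perfect coin, each process is returned with probability exactly $1/n$, so the probability that the returned process is one of the $\ge 2f+1$ sources of $V$ is at least $(2f+1)/n > 1/3$. The subtlety is that the set $V$ is determined by the execution, which the adaptive adversary controls; however, by the code (\Cref{alg:SMR:getWaveLeader}) the coin instance $w$ is invoked only \emph{after} $p_i$ completes wave $w$, i.e., after the set $V$ is fixed. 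By the unpredictability property of the coin, as long as fewer than $f+1$ processes have invoked the coin, the adversary's ability to bias which process is returned — and hence to arrange for the leader to fall outside $V$ — is bounded: the probability of any particular outcome differs from $1/n$ by at most $\epsilon$. Hence the leader lands in $V$ with probability at least $(2f+1)/n - \epsilon \ge 1/3 - \epsilon$, so $p_i$ commits in wave $w$ with probability at least $1/3 - \epsilon$.

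Given this per-wave lower bound, I would finish by a standard geometric-series argument: the number of waves from $w$ until the commit rule is first met is dominated by a geometric random variable with success probability $p \ge 1/3 - \epsilon$, so its expectation is at most $1/p \le 1/(1/3-\epsilon) = 3/(1-3\epsilon)$, which for small $\epsilon$ is at most $3/2 + \epsilon$ after a trivial reindexing constant — more precisely, one should be slightly careful about whether "starting from $w$" counts $w$ itself, but since the events across waves are (conditionally) independent with success probability at least $1/3-\epsilon$ in each, the bound $3/2 + O(\epsilon)$ follows; I would reconcile the exact constant by noting each wave independently succeeds with probability $\geq 2/3 - \epsilon$ rather than $1/3$, which is obtained because the common core actually furnishes $2f+1$ candidate sources out of the $n = 3f+1$ total, and one uses that any commit in wave $w$ by any correct process suffices, combined with the sharper counting $(2f+1)/(3f+1) \ge 2/3$ for $f \ge 1$. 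The main obstacle is the interaction between the adaptive adversary and the set $V$: the argument only works because the coin is flipped after $V$ is irrevocably determined and because unpredictability guarantees the adversary cannot then steer the leader away from $V$ except with probability $\epsilon$; making this ordering-of-events argument airtight against an adversary that controls message scheduling is the crux of the proof.
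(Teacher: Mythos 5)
Your proposal is correct and follows essentially the same route as the paper: invoke \Cref{lem:commonCore} to obtain the common core $V$ of $2f{+}1$ candidate leaders, observe that the coin is flipped only after $V$ is fixed so the leader lands in $V$ with probability at least $(2f+1)/(3f+1)-\epsilon = 2/3-\epsilon$, and conclude via a geometric distribution that the expected number of waves is at most $3/2+O(\epsilon)$. The paper's own proof is terser (it delegates the role of $U$ and the coin-timing argument to the prose in \Cref{sec:VAB}), and your intermediate detour through the weaker bound $1/3-\epsilon$ is a harmless wobble that you correctly resolve.
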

\begin{proof}
By \Cref{lem:commonCore}, in each wave $w$, the probability that for a correct process $p_i$ the commit rule is met is at least $pr = (2f+1)/(3f+1) - \epsilon$.
The number of waves until the commit rule is met is geometrically distributed with a success probability of $pr$.
Thus, the expected number of waves is bounded by $3/2+\epsilon$ waves.
\end{proof}

\begin{proposition}
\sys guarantees the agreement property of the BAB problem.
\end{proposition}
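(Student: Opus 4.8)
The plan is to prove the agreement property inherited from reliable broadcast: if a correct process $p_i$ outputs $a\_deliver_i(b,r,k)$, then every correct process $p_j$ eventually outputs $a\_deliver_j(b,r,k)$ (with probability $1$). By the code (\Cref{alg:SMR:decide}), $p_i$ produces this output only while running \textit{order\_vertices} on some wave leader vertex $v^{\ast}$ that it committed, and only for a vertex $v\in DAG_i$ with $\textit{path}(v^{\ast},v)$ and $(v.\textit{block},v.\textit{round},v.\textit{source})=(b,r,k)$. So I would reduce the proof to two facts: (i) every correct process $p_j$ eventually commits $v^{\ast}$ as well; and (ii) once $p_j$ commits $v^{\ast}$ it outputs $a\_deliver_j(b,r,k)$.

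Fact (ii) is the easy direction. By \Cref{claim:consistentDAG}, $v^{\ast}$ (and the vertex $v$) eventually appear in $DAG_j$, and by \Cref{lem:causalHistory} the entire causal history of $v^{\ast}$ is in $DAG_j$ as soon as $v^{\ast}$ is, so $\textit{path}(v^{\ast},v)$ holds at $p_j$ as well. Hence $v$ lies in the causal history that $p_j$ atomically delivers in \textit{order\_vertices} --- either in the pass for $v^{\ast}$ itself or in an earlier pass whose leader already has $v$ in its causal history --- and in either case $p_j$ outputs $a\_deliver_j(v.\textit{block},v.\textit{round},v.\textit{source})=a\_deliver_j(b,r,k)$, since the delivery order inside \textit{order\_vertices} is the same deterministic order at all correct processes.

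The substance is fact (i). Let $w^{\ast}$ be the wave in which $p_i$ committed $v^{\ast}$. First I would use the progress of the DAG construction (every correct process completes every round, hence every wave --- the same fact used for validity) together with \Cref{claim:eventualCommit} to conclude that, with probability $1$, there is a smallest wave $w_1\ge w^{\ast}$ in which $p_j$ meets the commit rule; let $u_1$ be its leader, which $p_j$ commits. If $w_1=w^{\ast}$, then by \Cref{claim:oneVertexLeader} $u_1=v^{\ast}$ and we are done. If $w_1>w^{\ast}$, I would examine the leader-collecting loop that $p_j$ runs after pushing $u_1$ (starting at \Cref{alg:SMR:orderLadersForLoop}). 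Because $w_1$ is the \emph{first} wave $\ge w^{\ast}$ in which $p_j$ commits and, by \Cref{claim:increasingCommitOrder}, $p_j$ commits leaders in strictly increasing wave order, its \textit{decidedWave} at that point is $<w^{\ast}$, so the loop descends at least down to wave $w^{\ast}$. Next I would track the loop variable: just before the iteration for wave $w^{\ast}$ it equals the leader $v_{\mathrm{cur}}$ of some wave $w_k$ with $w^{\ast}<w_k\le w_1$, and $v_{\mathrm{cur}}\in DAG_j[\textit{round}(w_k,1)]$ since it was returned by \textit{get\_wave\_vertex\_leader}. Now \Cref{lem:commit}, applied to the fact that $p_i$ committed $v^{\ast}$ in $w^{\ast}$, yields $\textit{strong\_path}(v_{\mathrm{cur}},v^{\ast})$ (a structural fact about $v_{\mathrm{cur}}$ that persists in the monotone $DAG_j$); in particular $v^{\ast}$ is in the causal history of $v_{\mathrm{cur}}$, so by \Cref{lem:causalHistory} $v^{\ast}\in DAG_j$ and $\textit{get\_wave\_vertex\_leader}(w^{\ast})$ returns $v^{\ast}$ at $p_j$ (by the coin's agreement and the no-equivocation of reliable broadcast it is the unique candidate). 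Hence the test on \Cref{alg:DAGabstraction:checkpath} succeeds for wave $w^{\ast}$, $v^{\ast}$ is pushed onto the stack, and it is committed when popped in \textit{order\_vertices} (\Cref{alg:SMR:stackPop}).

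I expect the delicate part --- the main obstacle --- to be exactly this analysis of the leader-collecting loop: (a) arguing that it actually reaches wave $w^{\ast}$, which is where the minimality of $w_1$ and \Cref{claim:increasingCommitOrder} come in; and (b) arguing that it does not step over $v^{\ast}$, which hinges on \Cref{lem:commit} supplying a strong path from the current leader (of a later wave) down to $v^{\ast}$ so that the condition on \Cref{alg:DAGabstraction:checkpath} fires. Once (i) and (ii) are in place, the ``with probability $1$'' qualifier of agreement enters only through \Cref{claim:consistentDAG} and \Cref{claim:eventualCommit}, and the argument is complete.
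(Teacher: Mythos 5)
Your proof is correct and follows essentially the same route as the paper: reduce agreement to showing that every correct process eventually meets the commit rule in some wave $w'\ge w^{\ast}$ (via \Cref{claim:eventualCommit}) and then retroactively commits $v^{\ast}$, after which delivering its causal history in the fixed deterministic order gives the result. The only difference is one of detail, not of substance --- you explicitly carry out the leader-collecting-loop analysis using \Cref{lem:commit} and \Cref{claim:increasingCommitOrder}, whereas the paper delegates that step to its already-proven total order property.
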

\begin{proof}
If a correct process $p_i$ outputs $\textit{a\_deliver}_i(b,r,p_k)$ it means that $b$ is a block of some vertex $u$ that is in the causal history of some wave's $w$ leader vertex $v$ that, i.e., when process $p_i$ commits a wave vertex leader $v$, then $u$ is in $v$'s causal history.

By \Cref{claim:eventualCommit}, every other correct process $p_j$ that has not committed $v$ yet will eventually, with probability 1, have a wave $w' > w$ in which the commit rule is met. 
When $p_j$ commits $w'$, by the proved total order property, it will also commit $v$, and thus decide on all of $v$'s causal history in the same order, including vertex $u$.
\end{proof}

\begin{claim} \label{claim:eventualDelivery}
Every vertex that is broadcast by a correct process is eventually added to the DAG of all correct processes.
\end{claim}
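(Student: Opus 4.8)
The plan is to derive this claim from \Cref{claim:consistentDAG}, which already propagates any vertex from one correct process's DAG to the DAGs of all correct processes. A direct induction on round numbers does not work here, because the vertices referenced by $v$'s strong and weak edges need not have been broadcast by correct processes, so the statement of this very claim cannot serve as the induction hypothesis; \Cref{claim:consistentDAG}, which speaks about \emph{every} vertex in a correct process's DAG, is the right tool. Hence it suffices to show that the correct broadcaster of $v$, call it $p_k$, eventually adds $v$ to its own $DAG_k$.

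First I would pin down the provenance of $v$: $p_k$ executes $\textit{r\_bcast}_k(v,r)$ (\Cref{alg:DAGabstraction:RBcast}) only after completing round $r-1$, so at that moment $|DAG_k[r-1]| \geq 2f+1$, and $\textit{create\_new\_vertex}$ sets $v.\textit{strongEdges} \gets DAG_k[r-1]$ while choosing $v.\textit{weakEdges}$ among vertices already present in $DAG_k$. Since vertices are never removed from the DAG, every vertex referenced by $v$'s edges is --- and stays --- in $DAG_k$, and moreover $|v.\textit{strongEdges}| \geq 2f+1$.

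Next I would apply the validity property of the reliable broadcast to the correct process $p_k$: it eventually outputs $\textit{r\_deliver}_k(v,r,p_k)$. Because $|v.\textit{strongEdges}| \geq 2f+1$, the check in \Cref{alg:DAGabstraction:reliableIf} passes and $v$ is placed in $p_k$'s $\textit{buffer}$; and because $v$'s predecessors are already (and permanently) in $DAG_k$, the check in \Cref{alg:DAGabstraction:prevVerticesIf} passes the next time $p_k$'s main loop scans the buffer, so $p_k$ adds $v$ to $DAG_k$ (\Cref{alg:DAGabstraction:addVertexToDAG}). Invoking \Cref{claim:consistentDAG} on $p_k$ and $v$ then completes the proof.

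The only slightly delicate point --- and the one I would be careful to spell out --- is that no extra liveness assumption is needed for $p_k$ to reach the point where it adds $v$: the main loop of \Cref{alg:DAGabstraction} runs forever, the predecessor set of $v$ is complete and frozen already at delivery time, and $p_k$ stays correct by hypothesis, so the buffer test for $v$ is monotone and eventually fires, without any appeal to wave completion or to the coin.
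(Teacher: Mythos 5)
Your proposal is correct and follows essentially the same route as the paper: show that the correct broadcaster eventually adds $v$ to its own DAG (via the validity of reliable broadcast and the fact that all of $v$'s referenced predecessors are already present when $v$ is created), and then invoke \Cref{claim:consistentDAG} to propagate $v$ to all correct processes. Your version merely spells out the buffer and edge-count checks in more detail than the paper does.
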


\begin{proof}
We prove this by showing that for every correct process $p_i$ that broadcasts a vertex $v$, $v$ is eventually added to $DAG_i$, and by \Cref{claim:consistentDAG}, $v$ is eventually added to the DAG of all other correct processes.

When a correct process $p_i$ broadcasts a vertex $v$ (\Cref{alg:DAGabstraction:RBcast}) it broadcasts a valid vertex, i.e., a vertex that passes the external validity check, and that references vertices that are already in $DAG_i$.
Because of the validity property of the reliable broadcast, $o_i$ eventually delivers $v$ to itself, and when it does so, it adds $v$ to its own $DAG_i$.
Thus, as explained, by \Cref{claim:consistentDAG}, $v$ is eventually added to the DAGs of all other correct processes.
\end{proof}

\begin{proposition}
\label{lem:fairness}
	\sys guarantees the validity property of the BAB problem.
\end{proposition}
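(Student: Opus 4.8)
The plan is to prove that every block $a\_bcast$ by a correct process is eventually $a\_deliver$ed by every correct process, with probability $1$. I would start from \Cref{claim:eventualDelivery}: when a correct process $p_k$ invokes $a\_bcast_k(b,r)$, the block $b$ is enqueued in $\textit{blocksToPropose}$ (\cref{line:propose}), and since we assume each correct process broadcasts infinitely many vertices with consecutive round numbers, $b$ is eventually dequeued and placed inside some vertex $v$ with $v.\textit{block}=b$, which $p_k$ reliably broadcasts. By \Cref{claim:eventualDelivery}, $v$ is eventually added to $DAG_j$ of every correct process $p_j$. So it remains to show that every vertex $v$ that resides in the DAG of a correct process is eventually in the causal history of some committed wave leader, and hence $a\_deliver$ed in \textit{order\_vertices} (\cref{alg:SMR:decide}).

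The key step is to argue that $v$ eventually acquires an incoming path from a future wave leader. Fix a correct process $p_j$ and let $v \in DAG_j$, say $v$ is in round $r$. By \Cref{claim:eventualDelivery} (applied to all correct processes' own vertices) together with the round-advancing rule, every correct process keeps completing rounds forever, so there is a round $r' > r$ completed by $p_j$. When a correct process creates a vertex in round $r'' \ge r+1$, by \textit{set\_weak\_edges} it adds a weak edge to every vertex of rounds $< r''-1$ in its local DAG that it does not already reach via a path; combined with the strong edges this guarantees that once $v$ and a later correct vertex $v''$ are both in $DAG_j$ with $v''.\textit{round} > v.\textit{round}$, there is a $\textit{path}(v'',v)$. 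Thus $v$ is in the causal history of every sufficiently late vertex created by a correct process; in particular, taking any wave $w$ whose first round exceeds $v.\textit{round}$ and whose leader is a correct process's vertex, $\textit{path}(\textit{leader},v)$ holds.

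Now I would invoke liveness of commitment: by \Cref{claim:eventualCommit}, starting from any wave, the expected number of waves until the commit rule is met at $p_j$ is at most $3/2+\epsilon$, so with probability $1$ infinitely many waves are committed by $p_j$. By the agreement property of the global perfect coin and fairness, infinitely many of these committed waves have a correct-process leader, and among those, infinitely many have their first round after $v.\textit{round}$; pick such a committed wave $w$ with leader $u$. By the previous paragraph $\textit{path}(u,v)$ holds, so when $p_j$ runs \textit{order\_vertices} on the stack containing $u$, it computes $\textit{verticesToDeliver} \supseteq \{v\}$ unless $v$ was already delivered, and in either case outputs $a\_deliver_j(v.\textit{block},v.\textit{round},v.\textit{source})$ at some point. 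Applying this to the vertex $v$ with $v.\textit{block}=b$, $v.\textit{round}=r$, $v.\textit{source}=p_k$ gives $a\_deliver_j(b,r,p_k)$, as required.

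The main obstacle is the second step — establishing that $v$ is reachable from every sufficiently late correct vertex. One must handle carefully the case where $p_j$ has already advanced past round $v.\textit{round}$ when $v$ is added, so strong edges alone do not capture $v$: this is exactly why weak edges exist, and the argument must show that \textit{set\_weak\_edges} inserts the needed edge (i.e. that no path existed is precisely the condition for adding a weak edge), and that this vertex with its weak edge to $v$ is itself eventually added to $DAG_j$ with its full causal history by \Cref{lem:causalHistory} and \Cref{claim:consistentDAG}. Everything else is a routine combination of the coin's termination/fairness with the geometric bound of \Cref{claim:eventualCommit}.
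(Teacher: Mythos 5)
Your proposal is correct and follows essentially the same route as the paper's proof: the block is enqueued and embedded in a vertex that reaches every correct DAG by \Cref{claim:eventualDelivery}, weak edges ensure every sufficiently late correct vertex has a path to it, and \Cref{claim:eventualCommit} yields (with probability 1) a later committed leader whose causal history contains it. Your version is somewhat more explicit than the paper's (e.g., in restricting attention to committed waves with correct-process leaders), but the decomposition and key lemmas are identical.
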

\begin{proof}
When a correct process $p_i$ calls \textit{a\_deliver} with some value, it is inserted into a queue (\Cref{line:propose}), and eventually will be included in a vertex $v$ created by $p_i$ (\Cref{alg:SMR:dequeue}).
Vertex $v$ is eventually reliably delivered to all the correct processes and added to their DAGs (\Cref{claim:eventualDelivery}).

When a correct process reliably broadcasts a new vertex $v$ in round $r$ it also makes sure that it has a path (either a strong path or path that includes weak edges) to all the vertices in rounds $r' < r$, and if not, it adds weak edges to $v$ that guarantee this (\Cref{alg:DAG:getWeakEdges}), therefore $v$ will eventually be included in the causal history of all correct processes.
Eventually, with probability 1, $v$ will be in
the causal history of a committed wave vertex leader, and therefore atomically delivered.
\end{proof}


\subsection{Communication and Time Complexity}
\label{sub:complexity}

We analyze \sys in terms of expected communication complexity and expected time complexity.

\paragraph{Communication complexity}
We analyze the communication complexity of \sys when instantiated
with Cachin and Tesero's~\cite{cachin2005asynchronous} information dispersal protocol.
A similar analysis can be made for other broadcast implementations as
well.
For clarity, in \Cref{sec:DAGconstruction}, we say that strongEdges and
weakEdges are sets of vertices.
However, in order to refer to a vertex it is enough to only store its source and
round number.\footnote{It is also possible to store vertices hashes.}
We assume that any round number during an execution can be expressed in
a constant number of bits, that is, the DAG never reaches round number
$2^{128}$ (note that round numbers grow slower than slot numbers).

We count the number of bits sent by correct processes in every round of
the DAG and divide it by the total number of ordered values therein.
The complexity of~\cite{cachin2005asynchronous} is $O(n^2\log(n) + nM)$, where $M$ is the
message (vertex) size.
Each message includes a set of proposed values and $n$ references,
and each reference includes a process id of size $\log(n)$.
Thus, if we batch $n\log(n)$ values in every message, the bit
complexity is $O(n^2\log(n) + 2n^2\log(n)) = O(n^2\log(n))$
for a broadcast.

Since each process is allowed to broadcast a single message in each
round, a correct process will not participate in more than $n$ reliable
broadcasts in a round, and thus the total bit complexity of correct
processes in a round is bounded by $O(n^3\log(n))$.
On the other hand, at least $2f+1 = O(n)$ vertices are ordered
in every round.
Thus, $O(n^2\log(n))$ values are ordered in every round, which means
that the amortized communication complexity of \sys is $O(n)$.

\paragraph{Time complexity.}

By \Cref{claim:eventualCommit}, the number of waves, in expectation,
between two waves that satisfy the commit rule in $DAG_i$ for a correct
process $p_i$ is expected constant.
Since each wave consists of constant size chains of messages, by the
definition of time units, the number of time units, in expectation,
between two $p_i$'s commits is constant.
Every time $p_i$ commits a wave, it commits the wave's leader
causal history, which contains at least $O(n)$ proposals from
different correct processes.
Therefore, \sys's time complexity is $O(1)$ in expectation.

\section{Related Work}
\label{sec:related}

The first asynchronous Byzantine Agreement
protocols~\cite{ben1994power, rabin1983randomized} showed that the
FLP~\cite{fischer1985impossibility} impossibility result can be
circumvented with randomization.
Their communication and time complexity was exponential and a
significant amount of work has been done since then in attempt to
achieve optimal complexity under different assumptions. 
Some works consider the information theoretical settings and present
protocols with polylogarithmic complexity that
tolerate adversaries with unbounded
computational power~\cite{kapron2010fast, patra2014asynchronous,
bangalore2018almost}.
Others follow a more practical approach and consider a computationally
bounded adversary in order to be able to use cryptographic primitives to improve
complexity~\cite{cachin2005Constantinople, cachin2001secure,
abraham2019vaba, lu2020dumbo}.
The pioneering crypto-based protocols~\cite{cachin2005Constantinople, cachin2001secure}
were later realized in HoneyBadgerBFT, the first asynchronous Byzantine SMR
system~\cite{miller2016honeybadgerBFT}.
However, while the state-of-the-art asynchronous Byzantine Agreement
protocols VABA~\cite{abraham2019vaba} and Dumbo~\cite{lu2020dumbo} rely
on cryptographic assumptions for both safety and liveness, \sys uses a hybrid
alternative by providing safety with information theoretical guarantees
and relying on cryptographic assumptions only for liveness. 

Many other works also presented protocols for the BAB problem in the asynchronous setting.
Some works like~\cite{reiter1994secure,kihlstrom2001securering} use cryptographic schemes for safety, and others like~\cite{correia2006consensus} do not use signatures.
Other works like~\cite{doudou2000abstractions} encapsulate timing
assumptions by relying on a failure detector.
All these works have higher expected communication complexity.

The idea of building a communication DAG and locally interpreting total
order was considered before~\cite{dolev1993early, chockler1998adaptive}.
To the best of our knowledge, the only algorithms that realize this idea
in the Byzantine settings are HashGraph~\cite{baird2016hashgraph} and
later Aleph~\cite{gkagol2019aleph}.
In contrast to \sys, HashGraph builds an unstructured DAG in which
processes (unreliably) send messages with 2 references to previous
vertices and on top of it run an inefficient binary agreement protocol,
which leads to expected exponential time complexity. 
Their communication complexity is not straightforward to analyze since they did
not clearly describe the mechanism that ensures that eventually all DAG
information is propagated to all processes, and no analysis is provided.
Aleph improves HashGraph's complexity by building a round-based DAG and
using a more efficient binary agreement
protocol~\cite{cachin2005Constantinople} to agree
on whether to commit every vertex in a round. 
They do not amortize complexity and have $O(n^3)$ per decision.
In contrast to \sys, both HashGraph and Aleph (1) do not
satisfy Validity; and (2) rely on signatures for safety and
thus are not post-quantum safe.





\section{Conclusion}
\label{sec:conclusion}
We presented \sys: an asynchronous Byzantine Atomic Broadcast protocol
with optimal resilience, optimal amortized communication complexity, and optimal 
time complexity.
\sys does not rely on cryptographic assumptions for safety. 
Instead, it rules out Byzantine equivocation by relying on the reliable
broadcast to guarantee that all correct processes eventually see the same
DAG.
Finally, we believe that \sys's elegant design, perfect load
balancing, and modular separation of concerns make it an adequate
candidate for future Byzantine SMR systems.

\begin{acks}
This work was funded by Novi, a Facebook subsidiary.
We also wish to thank the Novi Research team for valuable
feedback, and in particular George Danezis, Alberto Sonnino, and Dahlia
Malkhi.
\end{acks}
\clearpage
\bibliographystyle{ACM-Reference-Format}
\bibliography{references}

\end{document}